\providecommand{\email}[1]{\href{mailto:#1}{\nolinkurl{#1}\xspace}}
\newcommand{\pr}{\mathrm{pr}}
\definecolor{sapphire}{rgb}{0.03, 0.15, 0.4}
\newcommand{\norm}[1]{\left\lVert#1\right\rVert}
\newcommand{\br}[1]{\left\{#1\right\}}
\newcommand{\mbs}{\mathbb{S}}
\newcommand{\disum}{\displaystyle\sum}
\newcommand{\eps}{\varepsilon}
\newcommand{\cost}{\displaystyle\mathrm{cost}}
\newcommand{\opt}{\mathrm{opt}}
\newcommand{\costs}{\mathrm{cost}}
\newcommand{\ifproof}{\iffalse}
\newcommand{\erb}{$\rho$-distance}
\newcommand{\erbb}{$r$-Lipschitz function}
\newcommand{\argmin}{\mathrm{argmin}}
\newcommand{\KM}{\textsc{$k$-Means}}
\newcommand{\KMP}{\textsc{$k$-Means++}}
\newcommand{\KLM}{\textsc{Clustering++}}
\newcommand{\KLMo}{\textsc{$k$-Line-Means++}}
\newcommand{\CPC}{\textsc{$k$ $j$-Subspace-Coreset}}
\providecommand{\mb}[1]{\mathbf#1}
\newcommand*\bigcdot{\mathpalette\bigcdot@{.5}}
\newcommand*\bigcdot@[2]{\mathbin{\vcenter{\hbox{\scalebox{#2}{$\m@th#1\bullet$}}}}}
\newcommand{\scal}{0.50}
\newcommand{\comment}[1]{}
\newcommand{\REAL}{\ensuremath{\mathbb{R}}}
\newcommand{\abs}[1]        {\left| #1\right|}
\newcommand{\KSC}{\emph{$(\eps, \mbs(j))$-coreset }}
\newcommand{\haf}{\frac{9}{10}}
\newcommand{\iap}{with probability at least }
  \newcounter{theorem}
 \newtheorem{Corollary}[theorem]{Corollary}
  \newtheorem{proof}[theorem]{Proof}
 \newtheorem{Lemma}[theorem]{Lemma}		
 \newtheorem{Theorem}[theorem]{Theorem}		
\newtheorem{Definition}[theorem]{Definition}		
\begin{document}
\title{Faster Projective Clustering Approximation of Big Data}
\date{\today}

\author{
Adiel Statman\thanks{Computer Science Department, University of Haifa. 
E-mail: \email{statman.adiel@gmail.com}}
\and
Liat Rozenberg\thanks{School of Information and Communication Technology, Griffith University, Australia. 
E-mail: \email{liatle@gmail.com}}
\and
Dan Feldman\thanks{Computer Science Department, University of Haifa. 
E-mail: \email{dannyf.post@gmail.com}}
}

\maketitle
\begin{abstract}
In projective clustering we are given a set of n points in $R^d$ and wish to cluster them to a set $S$ of $k$ linear subspaces in $R^d$ according to some given distance function. An $\eps$-coreset for this problem is a weighted (scaled) subset of the input points such that for every such possible $S$ the sum of these distances is approximated up to a factor of $(1+\eps)$.    
We suggest to reduce the size of existing coresets by suggesting the first $O(\log(m))$ approximation for the case of $m$ lines clustering in $O(ndm)$ time, compared to the existing $\exp(m)$ solution. We then project the points on these lines and prove that for a sufficiently large $m$ we obtain a coreset for projective clustering. Our algorithm also generalize to handle outliers. Experimental results and open code are also provided.
\end{abstract}

\section{Introduction}

\paragraph{\textbf{Clustering and \KM}} For a given similarity measure, clustering is the problem of partitioning a given set of objects into groups, such that objects in the same group are more similar to each other, than to objects in the other groups. There are many different clustering techniques, but probably the most prominent and common technique is Lloyd’s
algorithm or the \KM{ }algorithm~\cite{lloyd1982least}. The input to the classical Euclidean \KM{ }optimization problem is a set $P$ of $n$ points in $\mathbb{R}^d$, and the goal is to group the $n$ points into $k$ clusters, by computing a set of $k$-centers (also points in $\mathbb{R}^d$) that minimizes the sum of squared distances between each input point to its nearest center. The algorithm is initialized with $k$ random points (centroids). At each iteration, each of the input points is classified to its closest centroid. A new set of $k$ centroids is constructed by taking the mean of each of the current $k$ clusters. This method is repeated until convergence or until a certain property holds.
\KMP{ } was formulated and proved in \cite{arthur2007k}. It is an algorithm for a constant bound of optimal $k$-means clustering of a set. Both \KM{ }and \KMP{ } were formulated using the common and relatively simple metric function of sum of squared distances. However, other clustering techniques might require a unique and less intuitive metric function. 
In \cite{statman2020k} we proved bounding for more general metric functions, \erb{ }. One of the many advantages of \erb{ }is that these metrics generalize the triangle inequality~\cite{braverman2016new}. Also note that this set includes the metric function used for $\KM$ and $\KMP$.\\
In this paper we focus on \erbb{ }, which are \erb{ } functions in which $\rho$ is a function of $r$. 

\paragraph{\textbf{SVD}} The Singular Value Decomposition (SVD) was developed by different mathematicians in the 19th
century (see \cite{stewart1993early} for a historical overview). Numerically stable algorithms to compute it
were developed in the 60's \cite{golub1965calculating,golub1971singular}. In the recent years, very fast
computations of the SVD were suggested. The $k$-SVD of an $n\times d$ real matrix $P$ is used to compute its low-rank approximation, which is the projection of the rows of $P$ onto a linear (non-affine) $k$-dimensional subspace that minimizes its sum of squared distances over these rows, i.e.,
\[
\argmin_{X\in\REAL^{d\times k},X^TX=I } \norm{P - PXX^T}^2_F.
\]
Projection of a matrix on a subspace is called a low rank approximation.


\paragraph{\textbf{Coresets}}

For a huge amount of data, Clustering and subspace projection algorithms/solvers are time consuming. Another problem with such algorithms/solvers is that we may not be able to use them for big data on standard machines, since there is not enough memory to provide the relevant computations. 

A modern tool to handle this type of problems, is to compute a data summarization for the input that is sometimes called \emph{coresets}. Coresets also allow us to boost the running time of those algorithms/solvers while using less memory. 

Coresets are especially useful to (a) learn unbounded streaming data that cannot fit into main memory, (b) run in parallel on distributed data among thousands of machines, (c) use low communication between the machines, (d) apply real-time computations on the device, (e) handle privacy and security issues, (f) compute constrained optimization on a coreset that was constructed independently of these constraints and of curse boost there running time.

\paragraph{\textbf{Coresets for SVD}}
In the context of the $k$-SVD problem, given $\eps\in(0,\frac{1}{2})$, an $\eps$-coreset for a matrix $A\in\REAL^{n\times d}$ is a matrix $C\in \REAL^{m\times d}$ where $m \ll n$, which guarantees that the sum of the squared distances from any linear (non-affine) $k$-dimensional subspace to the rows of $C$ will be approximately equal to the  sum of the squared distances from the same $k$-subspace to the rows of $P$, up to a $(1\pm\eps)$ multiplicative factor. I.e., for any matrix $X\in\REAL^{d\times k}$, such that $X^TX=I$ we have,
$$ \abs{\norm{P - PXX^T}^2_F -  \norm{C - CXX^T}^2_F} \leq \eps\norm{P - PXX^T}^2_F.$$
Algorithms that compute $(1+\eps)$-approximation for low-rank approximation and subspace approximation are usually
based on randomization and significantly reduce the running time compared to computing
the accurate SVD \cite{clarkson2009numerical,clarkson2017low,deshpande2006adaptive,deshpande2010efficient, deshpande2011algorithms, feldman2010coresets, nguyen2009fast, sarlos2006improved, shyamalkumar2012efficient}. More
information on the large amount of research on this field can be found in \cite{halko2011finding} and \cite{mahoney2011randomized}.
Indeed the most useful subspace is one resulted from the SVD of the data, which is the subspace which gives the minimal least square error from the data. There are coresets which desiged to approximate data for projecting specifically on this subspace. Such are called "weak" coresets. 
However, in this paper deal with "strong" coresets which approximate the data for projecting on any subspace in the same dimension of the data.
 The first coreset for the $k$-dimensional subspace of size that is independent
of both $n$ and $d$, but are also subsets of the input points, was suggested in \cite{feldman2016dimensionality}.
The coreset size is larger but still polynomial in $O(\frac{k}{\eps})$. 
\paragraph{\textbf{Sparse Coresets for SVD}}
In this paper we consider only coresets that are \emph{subset} of their input points, up to a multiplicative weight (scaling). The advantage of such coresets are: (i)they preserved sparsity of the input, (ii)they enable interpretability, (iii) coreset may be used (heuristically) for other problems, (iv)lead less numerical issues that occur when non-exact linear combination of points are used. Following papers aimed to add
this property, e.g. since it preserves the sparsity of the input, easy to interpret, and more
numerically stable. However, their size is larger relating ones which are not a subset of the data; See an elaborated comparison in \cite{feldman2013turning}.
A coreset of size $O(\frac{k}{\eps^2})$ that is a
bit weaker (preserves the spectral norm instead of the Frobenius norm) but still satisfies our
coreset definition was suggested by Cohen, Nelson, and Woodruff in \cite{cohen2015optimal}. This coreset
is a generalization of the breakthrough result by Batson, Spielman, and Srivastava \cite{batson2012twice}
that suggested such a coreset for $k=d-1$. Their motivation was graph sparsification, where
each point is a binary vector of 2 non-zeroes that represents an edge in the graph. An open
problem is to reduce the running time and understand the intuition behind this result.

Applying Reduction algorithm of \cite{} on our coreset made it appropriate not only for one non-affine subspace, but for projective clustering over any affine $k$-subspaces.

\paragraph{\textbf{NLP Application}}

One  idea behind minimizing the squared Euclidean distance of lexical data such as document-term to the nearest subspace,
 is that the important information of the input
points/vectors lies in their direction rather than their length, i.e., vectors pointing in the same
direction correspond to the same type of information (topics) and low dimensional subspaces
can be viewed as combinations of topics describe by basis vectors of the subspace. For
example, if we want to cluster webpages by their TFIDF (term frequency inverse document
frequency) vectors that contain for each word its frequency inside a given webpage divided
by its frequency over all webpages, then a subspace might be spanned by one basis vector
for each of the words “computer”,“laptop”, “server”, and “notebook”, so that the subspace
spanned by these vectors contains all webpages that discuss different types of computers.


\subsection{Our contribution}
In this chapter we use the problem of $k$-line means, i.e. clustering among $k$ lines which intersect the origin, where will be used formulate a coreset for projective clustering on $k$-$j$ non-affine subspaces. We begin with formulating the distance function that reflect the distance of a point from such line, by comparing it to the measurement of the distance of the projection of this point on a unit sphere to the intersection points of that line with the unit sphere. 
We justify that by bounding this distance function by the distance function of a point to a line. Then we prove that this distance function is indeed a $\rho$-distance and thus the result of Chapter \ref{rodis} can be used in order to bound an optimal clustering among $k$-lines that intersect the origin. Say we sampled $m'$ lines, in that way we get a linear time algorithm which provide a $O(\log(m'))$-approximation for optimal projection on such $m'$-lines. Then we produce a coreset for projective clustering, by sampling such lines with our seeding algorithm, until the sum of distances of the data points from the lines is less than the sum of distances of the data points from the $k$ $j$-dimensional subspaces, and bounds its size depending on an indicator of the data degree of clustering, which is not required to be known a-priory.  
In this paper we: 
\begin{enumerate}
\item Prove a linear time $O(log(k))$-approximation of optimal $k$ non-affine $j$-dimensional subspace of any data in $\REAL^d$.
\item Prove a coreset for any $k$ non-affine $j$-dimensional subspaces received directly by sampling lines that intersect the origin (non-affine).
\item Provide extensive experimental results of our coreset method for a case of one $j$-dimensional subspace, i.e. a coreset for $SVD$ versus and upon the Algorithm of \cite{cohen2015optimal} and provide its pseudo code.
\item Provide full open code. 
\end{enumerate}

\section{$K$-line Clustering}
In this section we define the algorithm $\KLMo$ for approximating a data set $P$ of $n$ points in $\mathbb{R}^d$ by $k$ lines that intersect the origin. The algorithm uses $\KLM$; See Algorithm \ref{1}, with a function $w:P\to[0,\infty)$ and the function $f_{\ell}$ as defined in Definition~\ref{k-lines metric} below. The pseudocode of the algorithm is presented in Algorithm~\ref{2a}. We use this result in order to provide a linear time $O(\log k)$-approximation to clustering over $k$ $j$-subspaces; See Theorem \ref{badconst}.

\begin{Definition}\label{subspa0}
Let $Q\subseteq P$. For every point $p\in P$ let $\pi(p,Q)\in\mathrm{argmin}_ {q\in Q}f(p,q)$. We denote $f(p,Q)=f(p,\pi(p,Q))$. 
\end{Definition}

\begin{Definition}[$\cost$, $\opt$ and partition over a set]\label{def:costopt}
For an integer $m$ let $[m]=\br{1,\cdots,m}$. For a subset $X\subseteq P$ and a point $p\in P$, we denote $f(p,X)=\min_{x\in X}f(p,x)$ if $X\neq\emptyset$, and $f(p,\emptyset)=1$ otherwise.
Given a function $w:P\to(0,\infty)$, for every $G\subseteq P$ we define
\[
\cost(G,w,X):=\disum_{p\in G}w(p)f(p,X).
\]
For an integer $k\in [|G|]$, we define
\[
\opt(G,w,k):=\min_{X^*\subseteq G, |X^*|=k}\cost(G,w,X^*).
\]
Note that we denote $\cost(.,.)=\cost(.,w,.)$ and $\opt(.,.)=\opt(.,w,.)$ if $w$ is clear from the context.\\
A partition $\br{P_1,\cdots, P_k}$ of $P$ over a set $X=\br{x_1..,x_k}\subseteq \REAL^d$ is the partition of $P$ such that for every $i\in[k]$,
$f(p,X)=f(p,x_i)$ for every $p\in P_i$.

A partition $\br{P_1^*,\cdots, P_k^*}$ of $P$ is \emph{optimal} if there exists a subset $X^*\subseteq P$ where $|X^*|=k$ , such that
\[
\disum_{i=1}^k \cost(P_i^*, X^*)=\opt(P, k).
\]
The set $X^*$ is called a \emph{$k$-means} of $P$.
\end{Definition}

\begin{Definition}\label{subspa}
For integers $k>0$ and $j\in[d]$, we denote $\mbs(j)$ to be the union over every possible $j$-dimensional subspace.
\end{Definition}

\begin{Definition}\label{subspa1}
Let $k>0$ and $j\in[d]$ be integers and let $f_0:P^2\to[0,\infty)$ be a function, such that for every $p,q\in P$, 
\[
f_0(p,q)=\norm{p-q}^2.
\]
We denote,
\[
\cost_0(P, Q)=\disum_{p\in P}w(p)\cdot f_0(p,\pi(p,Q)).
\]
We denote, 
\[
\opt_0(P,k,j)=\inf_{\tilde S\subseteq \mbs(j),|\tilde S|=k}\cost_0(P,\tilde S).
\]
\end{Definition}

\begin{Definition}\label{k-lines metric}
For every $p\in P\setminus\br{0}$, let $\hat p=\frac{p}{\norm{p}}$ and let $w:P\to[0,\infty)$ and $f_{\ell}:P^2\to\REAL$ be functions, such that for every $p,q\in P$, 
\[
f_{\ell}(p,q)=\min\{\norm{\hat p-\hat q}^2,\norm{\hat p+\hat q}^2\}.
\]
We denote,
\[
\cost_\ell(P, Q)=\disum_{p\in P}w(p)\cdot\norm{p}^2f_0(p,\pi(p,Q)).
\]
And, 
\[
\opt_\ell(P,k)=\inf_{\tilde S\subseteq \mbs(1),|\tilde S|=k}\cost_\ell(P,\tilde S).
\]
\end{Definition}

\begin{figure}
\begin{subfigure}[h]{0.5\textwidth}
		
		\centering		
		\includegraphics[scale=0.5]{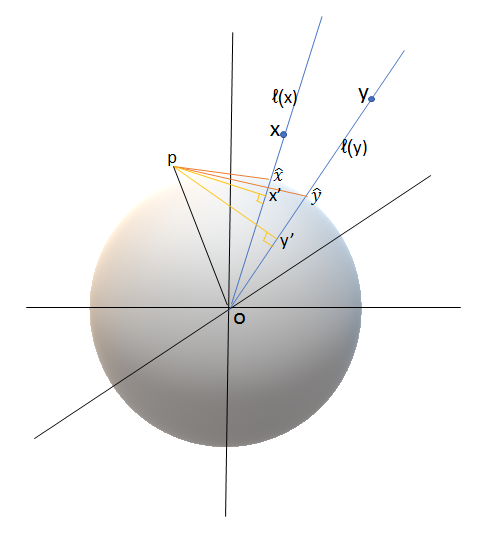}
		\caption{\small \label{klinemeans1}}
\end{subfigure}
\begin{subfigure}[h]{0.5\textwidth}
		\centering		
		\includegraphics[scale=0.5]{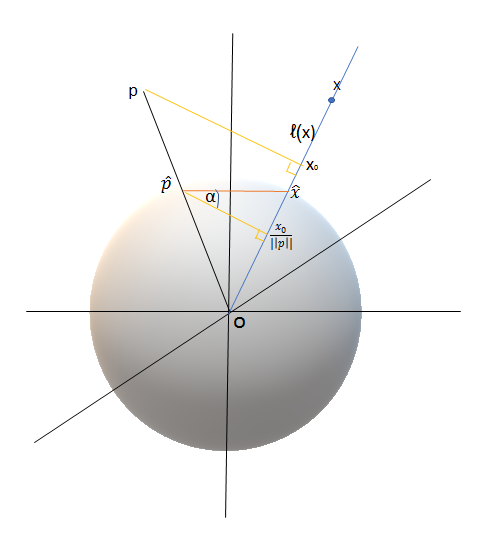}
		\caption{\small \label{klinemeans2}}
\end{subfigure}
\caption{\small \ref{klinemeans1}  An example that shows that if the distance between any point $p\in P$ and a line that intersects the origin $\ell(y)$, is greater than the distance between $p$ and the line that intersects the origin $\ell(x)$, then the distance between $p$ and $\hat y$ is greater than the distance between $p$ and $\hat x$. \ref{klinemeans2} Demonstration of the angle $\alpha$ between the red and yellow lines. As long as $\hat p$ is closer to $\hat x$ than to $-\hat x$, it holds that $0\leq\alpha\leq\frac{\pi}{4}$.}
\end{figure}
\begin{Lemma}\label{2app}
Let $k\geq1$ and $j\in[d]$ be integers and let $\mbs(j)$ be the union over every possible $j$-dimensional subspace in $\REAL^d$. Then,  for every $S\subset\mbs(j)$ such that $|S|=k$ the following hold.
\begin{enumerate}
\renewcommand{\labelenumi}{(\roman{enumi})}
\item For every $p\in\REAL^d$, 
\begin{align}
f_0(p,S)\leq\norm{p}^2f_\ell(p,S)\leq2f_0(p,S).\label{hh1}
\end{align}
\item 
\begin{align}
\cost_0(P,S)\leq\cost_\ell(P,S)\leq2\cost_0(P,S).\label{hh2}
\end{align}
\item 
\begin{align}
\opt_0(P,k,1)\leq\opt_\ell(P,k)\leq2\opt_0(P,k,1).\label{hh3}
\end{align}
\end{enumerate}
\end{Lemma}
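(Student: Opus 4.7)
The three statements all reduce to a single pointwise trigonometric identity applied to one subspace at a time, so the plan is to prove the one-subspace version first and then lift it by monotonicity of $\min$, summation, and infimum.

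I would start by fixing an arbitrary $V\in S$, writing $p=p_V+p^\perp$ with $p_V$ the orthogonal projection of $p$ onto $V$, and letting $\theta\in[0,\pi/2]$ be the angle determined by $\|p_V\|=\|p\|\cos\theta$ and $\|p^\perp\|=\|p\|\sin\theta$. Then $f_0(p,V)=\|p^\perp\|^2=\|p\|^2(1-\cos\theta)(1+\cos\theta)$. For the spherical quantity, I need to identify $f_\ell(p,V):=\inf_{q\in V\setminus\{0\}}\min\{\|\hat p-\hat q\|^2,\|\hat p+\hat q\|^2\}$. Expanding, this equals $2\bigl(1-\sup_{q\in V,\|q\|=1}|\hat p\cdot\hat q|\bigr)$, and the supremum is attained at $\hat q=\pm\hat p_V$, giving $f_\ell(p,V)=2(1-\cos\theta)$. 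Multiplying by $\|p\|^2$ and dividing by $f_0(p,V)$ yields the clean identity
\[
\frac{\|p\|^2 f_\ell(p,V)}{f_0(p,V)}=\frac{2}{1+\cos\theta},
\]
which lies in $[1,2]$ for $\theta\in[0,\pi/2]$. This establishes the pointwise one-subspace version of (\ref{hh1}).

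To upgrade to the full statement of (i) I would run the standard sandwich: let $V^*$ attain $f_0(p,S)$ and $V^\diamond$ attain $f_\ell(p,S)$; then
\[
f_0(p,S)\le f_0(p,V^\diamond)\le\|p\|^2 f_\ell(p,V^\diamond)=\|p\|^2 f_\ell(p,S)\le\|p\|^2 f_\ell(p,V^*)\le 2f_0(p,V^*)=2f_0(p,S).
\]
Part (ii) follows by multiplying (i) by $w(p)$ and summing over $p\in P$, because the $\|p\|^2$ factor in the middle expression matches exactly the $\|p\|^2$ appearing in the definition of $\cost_\ell$. For (iii), the lower bound $\opt_0(P,k,1)\le\opt_\ell(P,k)$ follows by applying (ii) to any $\tilde S\subseteq\mbs(1)$ with $|\tilde S|=k$ and taking infima on both sides; the upper bound $\opt_\ell(P,k)\le 2\opt_0(P,k,1)$ follows by taking a near-optimizer of $\cost_0(\cdot)$ over $k$ lines and using the right half of (ii) on it.

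The main subtlety, and the only step that is not pure bookkeeping, is verifying that the $\min\{\|\hat p-\hat q\|^2,\|\hat p+\hat q\|^2\}$ in the definition of $f_\ell$ is what allows us to replace the angle between $\hat p$ and $\hat q$ by the acute angle $\theta\in[0,\pi/2]$ to the subspace; without this sign choice the ratio $2/(1+\cos\theta)$ could blow up as $\cos\theta\to -1$, and the upper constant $2$ would fail. Figure~\ref{klinemeans2} already illustrates precisely this reduction to the $\alpha\in[0,\pi/2]$ regime, so the formal argument just needs to cite the fact that for any $q\in V$ one of $\pm\hat q$ makes an angle at most $\pi/2$ with $\hat p$.
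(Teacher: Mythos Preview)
Your proof is correct and takes a genuinely different route from the paper. You compute both distances explicitly in terms of the angle $\theta$ between $p$ and the subspace, obtain the closed-form ratio $\|p\|^2 f_\ell(p,V)/f_0(p,V)=2/(1+\cos\theta)$, and read off the bounds $[1,2]$ directly. The paper instead proceeds geometrically: it first establishes an order-preservation lemma (if a line $\ell(x)$ is closer to $p$ than $\ell(y)$ in the Euclidean sense then it is also closer in the $f_\ell$ sense), uses this to argue by contradiction that the $f_0$-minimizer and the $f_\ell$-minimizer inside a subspace lie on the same line through the origin, and then extracts the factor $2$ from a separate angle argument ($\alpha\in[0,\pi/4]$, hence $\cos^2\alpha\in[\tfrac12,1]$) comparing $\hat p-\hat x_0$ with $\hat p - x_0/\|p\|$. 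Your approach is shorter, yields the sharp ratio rather than just the bracket, and works uniformly for any $j$-dimensional subspace without singling out lines; the paper's approach, while longer, has the minor advantage that it never divides by $f_0(p,V)$, so the degenerate case $p\in V$ does not need a one-line aside. The lifting from one subspace to $k$ subspaces, and then to $\cost$ and $\opt$, is essentially identical in both arguments.
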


\begin{proof}
\begin{enumerate}
\renewcommand{\labelenumi}{(\roman{enumi})}
\item Let $p,x,y\in P$ and let $o$ be the origin. For every $q\in P$ let $\ell(q)$ be the line that intersects the origin and $q$, and see Definition \ref{k-lines metric} for $\hat{q}$.
We will first prove that if $f_0(p,\ell(x))<f_0(p,\ell(y))$, then $f_\ell(p,x)<f_\ell(p,y)$. \\
Without loss of generality we assume that $\norm{\hat{p}-\hat{x}}\leq \norm{\hat{p}+\hat{x}}$, thus
\begin{align}
 f_\ell(p,x)&=\min\br{\norm{\hat{p}-\hat{x}}^2,\norm{\hat{p}+\hat{x}}^2}\nonumber\\
&=\norm{\hat{p}-\hat{x}}^2.\label{loga}
\end{align}
 Let $y'$ be the point in $\ell(y)$ such that $f_0(p,y')= f_0(p,\ell(y))$ (projection) and let $x'$ be the point in $\ell(x)$ such that $f_0(p,x')= f_0(p,\ell(x))$; See Figure \ref{klinemeans1}.
Let $\beta_x$ be the angle between $[o,p]$ and $[o,x]$ and let $\beta_y$ be the angle between $[o,p]$ and $[o,y]$.
Since $\norm{p-x'}^2=f_0(p,x')=f_0(p,\ell(x))<f_0(p,\ell(y))=f_0(p,y')=\norm{p-y'}^2$ we have that also 
$$\norm{p-x'}<\norm{p-y'}.$$ Thus,
\begin{align}
||p||\sin(\beta_x)&=\norm{p-x'}\\
&<\norm{p-y'}\\
&=||p||\sin(\beta_y),
\end{align}
thus $\frac{\pi}{2}\geq\beta_y>\beta_x\geq0$ thus  $\cos(\beta_y)<\cos(\beta_x)$.
We have that
\begin{align}
 f_\ell(p,y)&\geq f_0(\hat{p},\hat{y})\nonumber\\
&=\norm{\hat{p}-\hat{y}}^2\nonumber\\
&=\norm{p}^2+\norm{\hat{y}}^2-2\norm{p}\norm{\hat{y}}\cos(\beta_y)\label{ggt2}\\
&>\norm{p}^2+\norm{\hat{y}}^2-2\norm{p}\norm{\hat{y}}\cos(\beta_x)\nonumber\\
&=\norm{p}^2+\norm{\hat{x}}^2-2\norm{p}\norm{\hat{x}}\cos(\beta_x)\nonumber\\
&=\norm{\hat{p}-\hat{x}}^2\label{ggt3}\\
&= f_0(\hat{p},\hat{x})\nonumber\\
&= f_\ell(p,x),\label{ggt}
\end{align}
where \eqref{ggt2} and \eqref{ggt3} holds bt the Law of cosines and \eqref{ggt} holds by \eqref{loga}.
%
We then get that $f_0(p,\ell(x))<f_0(p,\ell(y))$ yields $f_\ell(p,x)<f_\ell(p,y)$.
For every subspace $S_1\in S$, let $x_0\in S_1$ such that $f_0(p,x_0)=f_0(p,S_1)$, and let $x_\ell\in S_1$ such that $f_\ell(p,x_\ell)=f_\ell(p,S_1)$.
For every $x\in\REAL^d\setminus\br{0}$, let $\ell(x)$ be the line that intersects the origin and $x$; See Figure \ref{klinemeans2}. 

We prove that $x_0\in\ell(x_\ell)$.  
Let us assume by contradiction that $x_0\notin\ell(x_\ell)$. Thus, $f_0(p,\ell(x_0))<f_0(p,\ell(x_\ell))$ and by \eqref{ggt}  $f_\ell(p,x_0)<f_\ell(p,x_\ell)$, 
which contradicts the assumption $f_\ell(p,x_\ell)=f_\ell(p,S)$. Hence we conclude that  $x_0\in\ell(x_\ell)$. 

Therefore,
\begin{align}
\norm{\hat{p}-\hat{x_0}}^2=\min\br{\norm{\hat{p}-\hat{x_\ell}}^2,\norm{\hat{p}+\hat{x_\ell}}^2}. \label{obser}
\end{align}
Without loss of generality we assume that 
\begin{align}
\norm{\hat{p}-\hat{x_\ell}}^2\leq\norm{\hat{p}+\hat{x_\ell}}^2.\label{assu}
\end{align}
Let $\alpha$ be the angle between $\hat{p}-\frac{x_0}{\norm{p}}$ and $\hat{p}-\hat{x_0}$; See Figure \ref{klinemeans2}.Thus $$\cos\alpha=\frac{\norm{\hat{p}-\frac{x_0}{\norm{p}}}}{\norm{\hat{p}-\hat{x_0}}}.$$ 
From \eqref{assu} $0\leq\alpha\leq\frac{\pi}{4}$, so we have that $\frac{1}{2}\leq\cos^2{\alpha}\leq1$. Thus
$$\norm{\hat{p}-\frac{x_0}{\norm{p}}}^2\leq\norm{\hat{p}-\hat{x_0}}^2\leq2\norm{\hat{p}-\frac{x_0}{\norm{p}}}^2.$$Thus we get that
$$\norm{p-x_0}^2\leq\norm{p}^2\norm{\hat{p}-\hat{x_0}}^2\leq2\norm{p-x_0}^2.$$ 
Plugging \eqref{obser} in this yields,
$$\norm{p-x_0}^2\leq\norm{p}^2\min\br{\norm{\hat{p}-\hat{x_\ell}}^2,\norm{\hat{p}+\hat{x_\ell}}^2}\leq\norm{p+x_0}^2.$$ 
We then have that,
$$f_0(p,x_0)\leq \norm{p}^2\cdot f_\ell(p,x_\ell)\leq2f_0(p,x_0),$$
and since $x_\ell\in S_1$ and $x_0\in S_1$ by Definition \ref{subspa} we get that,
$f_0(p,S_1)=f_0(p,x_0)$ and $f_\ell(p,S_1)=f_\ell(p,x_\ell).$
Finally, we obtain
$$f_0(p,S_1)\leq \norm{p}^2\cdot f_\ell(p,S_1)\leq2f_0(p,S_1).$$
Let $S_0\in S$ be a subspace such that $f_0(p,S_0)= f_0(p,S)$, and let $S_\ell\in S$ be a subspace such that $f_\ell(p,S_\ell)= f_\ell(p,S)$. 
We get that,
$$f_0(p,S)=f_0(p,S_0)\leq f_0(p,S_\ell)\leq\norm{p}^2\cdot f_\ell(p,S_\ell)=\norm{p}^2\cdot f_\ell(p,S),$$
and also,
$$\norm{p}^2\cdot f_\ell(p,S)=\norm{p}^2\cdot f_\ell(p,S_\ell)\leq \norm{p}^2\cdot f_\ell(p,S_0)\leq 2f_0(p,S_0)= 2f_0(p,S).$$
Hence,
$$f_0(p,S)\leq \norm{p}^2\cdot f_\ell(p,S)\leq2f_0(p,S).$$
\item Summing \eqref{hh1} over every $p\in P$ and multiplying each side by a weight function $w:P\to[0,\infty)$  we get the result.
\item Let $L_\ell$ be a line such that $\opt_\ell(P,k)=\cost_\ell(P,L_\ell)$ and let $L_0$ be a line such that $\opt_0(P,k,1)=\cost_0(P,L_0)$. From (i) we get that
$$\opt_0(P,k,1)=\cost_0(P,L_0)\leq\cost_0(P,L_\ell)\leq\cost_\ell(P,L_\ell)\leq\opt_\ell(P,k,1).$$ Thus,
\begin{align}
&\opt_0(P,k,1)\leq\opt_\ell(P,k).\label{hl1}
\end{align}
Also we have,
$$\opt_\ell(P,k)=\cost_\ell(P,L_\ell)\leq2\cost_0(P,L_\ell)\leq2\opt_0(P,k,1).$$ Thus,
\begin{align}
&\opt_\ell(P,k)\leq2\opt_0(P,k,1).\label{hl2}
\end{align}
The result follows from \eqref{hl1} and \eqref{hl2}.
\end{enumerate}
\end{proof}

\begin{Definition}[\textbf{\erb{ }function}]\label{rodis}
Let $\rho>0$. A non-decreasing symmetric function $f:P^2\to[0,\infty)$ is an \erb{ } in $P$ if and only if for every $p',q,p\in P$.
\begin{equation}\label{defr}
f(q,p')\leq  \rho \big(f(q,p)+ f(p,p')\big).
\end{equation}
\end{Definition}

\begin{Definition}[$(\rho, \phi,\psi)$ metric\label{def:met}]
Let $(P,f)$ be a $\rho$-metric.
For $\phi,\eps>0$, the pair $(P,f)$ is a $(\rho,\phi,\psi)$-metric if for every $x,y,z\in P$ we have
\begin{equation}\label{fxx}
f(x,z)-f(y,z)\leq \phi f(x,y)+\psi f(x,z).
\end{equation}
\end{Definition}
\newcommand{\pd}{f}
\newcommand{\gr}{P}
\newcommand{\tpd}{g}
\newcommand{\dis}{\mathrm{dist}}		
\begin{Lemma}[Lemma 6 of \cite{statman2020k}\label{leolem}]
Let $\tpd:[0,\infty)\to [0,\infty)$ be a monotonic non-decreasing function that satisfies the following (Log-Log Lipschitz) condition: there is $r>0$ such that for every $x>0$ and $\Delta>1$ we have
\begin{equation}\label{ass3}
\tpd(\Delta x)\leq \Delta^{r}\tpd(x).
\end{equation}
Let $(\gr,\dis)$ be a metric space, and $\pd:\gr^2\to [0,\infty)$ be a mapping from every $p,c\in \gr$ to $\pd(p,c)=\pd(\dis(p,c))$. Then $(P,f)$ is a $(\rho,\phi,\psi)$-metric where
\begin{enumerate}
\renewcommand{\labelenumi}{(\roman{enumi})}
\item  $\rho=\max\br{2^{r-1},1}$,
\item $\phi=\left(\frac{r-1}{\psi}\right)^{r-1}$ and $\psi\in(0,r-1)$, if $r>1$, and
\item $\phi=1$ and $\psi=0$, if $r\leq 1$.
\end{enumerate}
\end{Lemma}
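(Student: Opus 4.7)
The conclusion packages two facts: first that $(P,f)$ is a $\rho$-metric with the claimed $\rho$ (part (i)), and second that the sharper inequality \eqref{fxx} holds with the stated $\phi,\psi$ (parts (ii) and (iii)). I would prove (i) first, since the subadditivity estimate it yields also drives case (iii). Throughout I write $g$ for the Log-Log Lipschitz function on $[0,\infty)$ and $\mathrm{dist}$ for the underlying metric, so that $f(p,c)=g(\mathrm{dist}(p,c))$.

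\emph{Step 1 --- the $\rho$-metric constant.} Let $a=\mathrm{dist}(q,p)$ and $b=\mathrm{dist}(p,p')$. The metric triangle inequality on $\mathrm{dist}$ together with the monotonicity of $g$ reduces \eqref{defr} to showing $g(a+b)\le\rho(g(a)+g(b))$. WLOG assume $a\ge b>0$, set $t=b/a\in(0,1]$, and apply the Log-Log Lipschitz bound twice: with $\Delta=1+t$ to obtain $g(a+b)\le(1+t)^r g(a)$, and with $\Delta=1/t$ to obtain $g(b)\ge t^r g(a)$. For $r\le 1$ the subadditivity $(1+t)^r\le 1+t^r$ then gives $g(a+b)\le g(a)+t^r g(a)\le g(a)+g(b)$, so $\rho=1$ suffices. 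For $r>1$ the convexity of $x\mapsto x^r$ yields $(1+t)^r\le 2^{r-1}(1+t^r)$, hence $g(a+b)\le 2^{r-1}(g(a)+g(b))$, matching $\rho=2^{r-1}$.

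\emph{Step 2 --- the easy case $r\le 1$ of (iii).} Since $\rho=1$, the function $f$ is literally subadditive, so $f(x,z)\le f(x,y)+f(y,z)$, which rearranges to $f(x,z)-f(y,z)\le f(x,y)$. This is \eqref{fxx} with $\phi=1$ and $\psi=0$.

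\emph{Step 3 --- the main obstacle: case (ii), $r>1$.} Set $a=\mathrm{dist}(x,z)$, $b=\mathrm{dist}(x,y)$, $c=\mathrm{dist}(y,z)$, so $a\le b+c$. When $a\le c$ the left side of \eqref{fxx} is non-positive and there is nothing to prove; otherwise the Log-Log Lipschitz applied with $\Delta=a/c$ gives $g(a)\le (1+b/c)^r g(c)$, hence $g(a)-g(c)\le\bigl((1+b/c)^r-1\bigr)g(c)$. The real challenge is converting this pointwise estimate into a Young-type bound $g(a)-g(c)\le\phi g(b)+\psi g(a)$ with the sharp exponent $r-1$. For the model case $g(x)=x^r$ this follows from the weighted Young inequality with conjugate exponents $p=r/(r-1)$ and $q=r$: for any $s>0$, $r a^{r-1}b\le (r-1)s^{r/(r-1)}a^r+s^{-r}b^r$; setting $\psi=(r-1)s^{r/(r-1)}$ (so that $s$ ranges over $(0,1)$ iff $\psi\in(0,r-1)$) forces $\phi=s^{-r}=((r-1)/\psi)^{r-1}$, exactly as claimed. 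To extend from $g(x)=x^r$ to a general Log-Log Lipschitz $g$, I would use a two-regime split on $b$: in the ``large'' regime $b\ge\lambda a$ the bound $g(a)\le\lambda^{-r}g(b)$ absorbs $g(a)-g(c)$ into the $\phi g(b)$ term, and in the ``small'' regime $b<\lambda a$ the inequality $c>(1-\lambda)a$ together with $g(c)\ge(1-\lambda)^r g(a)$ absorbs the deficit into the $\psi g(a)$ term; optimizing $\lambda$ as a function of $\psi$ recovers the claimed $\phi$. The delicate part is that the naive case split only gives a tradeoff with exponent $r$ rather than the sharp $r-1$; to get $r-1$ one must first exploit the MVT estimate $(1+t)^r-1\le r(1+t)^{r-1}t$ (valid for $r>1$, $t\ge 0$) on the pointwise bound before invoking Young, so that the power $a^{r-1}$ (not $a^r$) is what is being split against $b$.
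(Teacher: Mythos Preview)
The paper does not prove this lemma at all: it is quoted as Lemma~6 of \cite{statman2020k} and invoked as a black box, so there is no in-paper argument to compare your proposal against. That said, your Steps~1 and~2 are clean and correct.

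Step~3 has a genuine gap. The Young/MVT computation you carry out is valid for the model $g(x)=x^r$, but the extension to a general Log-Log Lipschitz $g$ is not completed. Your own diagnosis is right: the two-regime split on $b$ produces $\phi\asymp\psi^{-r}$ rather than the claimed $\psi^{-(r-1)}$, and the ``MVT first, then Young'' route you sketch leaves you with the factor $(b/c)\,g(c)$, which cannot be compared to $g(b)$ using only the one-sided growth bound $g(\Delta x)\le\Delta^{r}g(x)$ (that bound controls $g$ from above, not from below, along the ratio $b/c$).

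The repair is to anchor the estimate at $a=\mathrm{dist}(x,z)$ instead of at $c$. Assume $b<a$ (otherwise $g(a)\le g(b)$ by monotonicity and \eqref{fxx} is immediate since $\phi\ge 1$ whenever $\psi\le r-1$). From $a>c$ and the Log-Log Lipschitz bound with $\Delta=a/c$ one obtains the \emph{lower} bound $g(c)\ge(c/a)^{r}g(a)$, hence
\[
g(a)-g(c)\ \le\ \bigl(1-(c/a)^{r}\bigr)\,g(a)\ \le\ \bigl(1-(1-b/a)^{r}\bigr)\,g(a)\ \le\ r\,(b/a)\,g(a),
\]
using $c\ge a-b$ and the elementary inequality $1-(1-s)^{r}\le rs$ for $s\in[0,1]$, $r\ge 1$. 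With $s=b/a\in(0,1)$ the remaining task is the purely scalar inequality $rs\le\phi s^{r}+\psi$; minimizing $\phi s^{r}-rs+\psi$ over $s>0$ shows this holds exactly when $\phi\ge\bigl((r-1)/\psi\bigr)^{r-1}$. Finally, applying the Log-Log Lipschitz bound once more with $\Delta=a/b>1$ gives $g(b)\ge(b/a)^{r}g(a)=s^{r}g(a)$, which converts $\phi s^{r}g(a)$ into $\phi\,g(b)$ and yields \eqref{fxx} with the stated constants.
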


\begin{Lemma}\label{2dist}
The function $f_{\ell}$ is $8-distance$ function in $P$; See Definitions \ref{rodis} and \ref{k-lines metric}.
\end{Lemma}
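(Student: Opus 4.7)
The plan is to reduce the verification to the unit sphere and then either apply Lemma~\ref{leolem} or give a one-line direct argument. First I would observe that $f_{\ell}(p,q)$ depends only on the unit vectors $\hat p,\hat q$ and that
$f_{\ell}(p,q)=f_{\ell}(-p,q)=f_{\ell}(p,-q)$, so I may freely flip the sign of any representative on the sphere without changing $f_{\ell}$. In particular $f_{\ell}$ is symmetric and nonnegative, taking care of the standing requirements in Definition~\ref{rodis}; the substance of the claim is the triangle-like inequality $f_{\ell}(q,p')\le 8\bigl(f_{\ell}(q,p)+f_{\ell}(p,p')\bigr)$.

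For the triangle-like inequality, I would proceed as follows. Given $q,p,p'\in P\setminus\{0\}$, I would choose a sign for each of $\hat q$, $\hat p$, $\hat p'$ so that the minima in $f_{\ell}(q,p)$ and $f_{\ell}(p,p')$ are both attained by the ``$-$'' alternative; this is possible because fixing $\hat p$ and then choosing the signs of $\hat q$ and $\hat p'$ independently determines which of the two norms is the smaller one in each pair. With these choices,
\[
f_{\ell}(q,p')\le \|\hat q-\hat p'\|^{2}\le \bigl(\|\hat q-\hat p\|+\|\hat p-\hat p'\|\bigr)^{2}\le 2\bigl(\|\hat q-\hat p\|^{2}+\|\hat p-\hat p'\|^{2}\bigr)=2\bigl(f_{\ell}(q,p)+f_{\ell}(p,p')\bigr),
\]
using the Euclidean triangle inequality and $(a+b)^{2}\le 2(a^{2}+b^{2})$. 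Since $2\le 8$ this already suffices; alternatively, I could package the argument through Lemma~\ref{leolem} by setting $\mathrm{dist}(p,q):=\min\{\|\hat p-\hat q\|,\|\hat p+\hat q\|\}$ (which is a metric on the projective space, viewed as the sphere modulo antipodes), noting $f_{\ell}=g\circ\mathrm{dist}$ with $g(x)=x^{2}$ satisfying the Log--Log Lipschitz condition with $r=2$, and invoking the lemma to obtain $\rho=2^{r-1}=2$.

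The only mildly delicate step is justifying that the sign assignment in the display above can be made consistently with a fixed representative $\hat p$, i.e.\ that I may independently flip $\hat q$ and $\hat p'$ to realize both minima simultaneously; this is immediate once one writes the definitions out, but it is the place where the proof must be explicit. The constant $8$ in the statement is thus loose: the direct calculation gives $\rho=2$, and any larger value, in particular $8$, follows trivially.
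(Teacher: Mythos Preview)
Your argument is correct and in fact sharper than the paper's: you obtain $\rho=2$ directly, whereas the paper only establishes $\rho=8$. The routes are genuinely different. The paper does not work on the sphere; instead it first invokes Lemma~\ref{leolem} to conclude that the squared Euclidean distance $f_0$ is a $2$-distance, and then uses the sandwich inequality of Lemma~\ref{2app}(i), namely $f_0(p,\cdot)\le \|p\|^2 f_\ell(p,\cdot)\le 2f_0(p,\cdot)$, together with a WLOG assumption $\|p\|\ge\|q\|$, to transfer the $2$-distance property of $f_0$ to $f_\ell$. The two passages through Lemma~\ref{2app} cost an extra factor of $2$ each, which is why the paper ends up with $8$ rather than $2$. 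Your approach avoids this detour entirely by recognizing that $f_\ell$ is already the square of the chordal (pseudo)metric on $\mathbb{RP}^{d-1}$, so either a bare triangle inequality plus $(a+b)^2\le 2(a^2+b^2)$, or a single application of Lemma~\ref{leolem} with $r=2$, finishes the job. What the paper's route buys is that it reuses Lemma~\ref{2app}, which is needed elsewhere anyway; what your route buys is a shorter, self-contained proof with the optimal constant. The one point worth making fully explicit, as you note, is the sign-consistency step: fixing a representative $\hat p$ and then independently choosing the signs of $\hat q$ and $\hat p'$ so that both minima are realized by the subtraction is indeed always possible and is the only place a reader might pause.
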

\begin{proof} 
Let $p,q,y\in P$. Without loss of generality we assume that $\norm{p}\geq\norm{q}$. 
Since for $\tilde{f}_0(x)=x^2$ we have that $\tilde  f_0(\Delta x)\leq\Delta^2\tilde f_0(x)$, from Lemma \ref{leolem} we get that $f_0$ is $2-distance$ function in $P$, see Definition \ref{rodis}.
By Lemma \ref{2app} we have
\begin{align*}
\norm{p}^2f_\ell(p,q)&\leq2f_0(p,q)\\
&\leq4f_0(p,y)+4f_0(y,q)\\
&\leq4f_0(p,y)+4f_0(q,y)\\
&\leq8\norm{p}^2f_\ell(p,y)+8\norm{q}^2f_\ell(q,y)\\
&\leq8\norm{p}^2f_\ell(p,y)+8\norm{p}^2f_\ell(q,y).
\end{align*}
Thus $f_\ell$ is $8-distance$ in $P$.
\end{proof}

\begin{algorithm}
\SetKwInOut{Input}{Input}
\SetKwInOut{Output}{Output}
\caption{\KLM($P, w, X, t, f$); see Theorem~\ref{mainthe}}
\label{1}
\Input{A finite set $P\subseteq\REAL^d$, a function $w:P\to[0,\infty)$, a subset $X\subseteq P$, an integer $t\in[0,|P|-|X|]$ and a function $f:P^2\to[0,\infty)$}.
\Output{$Y\subseteq P$, where $|Y|=|X|+t$.}
$Y:=X$\\
\If{$t\geq1$}{
\For{$i:= 1$ to $t$}
{
For every $p\in P$, $\pr_{i}(p)=\frac{w(p)f(p,Y)}{\disum_{q\in P} w(Y)f(q,Y)}$~\tcp{$f(p,\emptyset):=1.$}
Pick a random point $y_i$ from $P$, where $y_i=p$ with probability $\pr_i(p)$ for every $p\in P$.\label{Lin3}\\
$Y:=X\cup \br{y_1,\cdots,y_i}$
}
}
\Return Y
\end{algorithm}
\begin{Theorem}\label{mainthe}\{Theorem 7 of \cite{statman2020k}\}
Let $P$ be a set of $n$ points in $\REAL^d$ and let $w:P\to[0,\infty)$ be a function. Let $\delta\in(0,1]$ and let $f:P^2\to [0,\infty)$ be a function over $P$.
Let $k\geq2$ be an integer, and $Y$ be the output of a call to $\KLM(P, w, \emptyset, k, f)$; See Algorithm~\ref{1}. Then, with probability at least $1-\delta$,
\[
\cost(P,Y)\leq\frac{8\rho^2}{\delta^2}(1+\ln(k))\opt(P,k).
\]
\end{Theorem}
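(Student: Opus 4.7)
The plan is to mirror the Arthur--Vassilvitskii analysis of $k$-means$++$ seeding, replacing squared Euclidean distance with an arbitrary $\rho$-distance $f$ and replacing the triangle inequality by the weak form of Definition~\ref{rodis}. Fix an optimal set $X^{*}=\br{x_1^{*},\ldots,x_k^{*}}$ attaining $\opt(P,k)$, and let $\br{P_1^{*},\ldots,P_k^{*}}$ be the induced optimal partition. Throughout, I would write $Y_{i}$ for the set produced after the $i$th iteration of the \textbf{for} loop in Algorithm~\ref{1}.

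The heart of the argument is a single-cluster lemma: for any $Y_{i-1}$ and any optimal cluster $P_j^{*}$ with $P_j^{*}\cap Y_{i-1}=\emptyset$ (an \emph{uncovered} cluster), if $y_i$ is drawn from the distribution $\pr_{i}$, then conditioned on $y_i\in P_j^{*}$,
\[
E\bigl[\cost(P_j^{*}, Y_{i-1}\cup\br{y_i})\bigr]\leq 8\rho^{2}\cost(P_j^{*}, X^{*}).
\]
I would prove this by expanding $f(p,y_i)\leq \rho\bigl(f(p,x_j^{*})+f(x_j^{*},y_i)\bigr)$ for each $p\in P_j^{*}$, then passing $f(x_j^{*},y_i)$ through the $\rho$-distance inequality a second time against every term appearing in the denominator of $\pr_i$ restricted to $P_j^{*}$. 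The two nested applications yield the $\rho^{2}$ factor, and summing the resulting cross-terms symmetrically in numerator and denominator produces the constant~$8$. The analogous (and simpler) bound for $i=1$, where $f(\cdot,\emptyset)=1$ means $y_1$ is drawn proportionally to $w$, gives the initial-step estimate needed to launch the induction.

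Next, I would carry out an induction on the number of uncovered clusters, combining the single-cluster lemma above with the trivial fact that the cost of a covered cluster only decreases when a new center is added. Using the standard power-mean inequality argument (the same as in Lemma~3.3 of Arthur--Vassilvitskii, but with the factor $8$ replaced by $8\rho^{2}$), this telescopes into
\[
E[\cost(P,Y_k)]\;\leq\; 8\rho^{2}\bigl(1+H_{k-1}\bigr)\opt(P,k)\;\leq\; 8\rho^{2}\bigl(1+\ln k\bigr)\opt(P,k),
\]
where $H_{k-1}=\sum_{i=1}^{k-1}1/i$ is the harmonic sum; this is where the $(1+\ln k)$ factor in the theorem appears.

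Finally, two applications of Markov's inequality, one for the inner expectation and one over the external randomness of the seeding, give the claimed $1/\delta^{2}$ high-probability bound (alternatively, one factor of $1/\delta$ arises from conditioning on the first sample lying in a good event and the second from Markov applied to the remainder of the run). The main obstacle is the single-cluster lemma: the two nested uses of the $\rho$-distance inequality must be arranged so that the cross-terms between numerator and denominator cancel cleanly, producing a constant that is independent of the particular cluster and of $Y_{i-1}$. Since the statement here is quoted as Theorem~7 of~\cite{statman2020k}, the detailed constant-chasing in that step is where I would defer to the cited proof.
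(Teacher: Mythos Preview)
The paper does not prove this theorem at all: it is stated verbatim as ``Theorem~7 of~\cite{statman2020k}'' and then immediately used as a black box in the proof of Theorem~\ref{badconst}. There is therefore no in-paper argument to compare against. You correctly identify this at the end of your proposal, and your sketch---adapting the Arthur--Vassilvitskii $k$-means$++$ analysis by replacing each use of the triangle inequality with the $\rho$-distance inequality of Definition~\ref{rodis}, and then converting the expectation bound to a high-probability bound via Markov---is indeed the approach taken in the cited reference. Since the present paper offers no proof, your outline is appropriate and there is nothing further to reconcile.
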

\begin{algorithm}
\SetKwInOut{Input}{Input}
\SetKwInOut{Output}{Output}
\caption{\textsc{\KLMo}($P, w, X, t$)}
\label{2a}
\Input{A finite set $P$, a function $w:P\to[0,\infty)$, a subset $X\subseteq P$ and an integer $t\in[0,|P|-|X|]$.}
\Output{$Y\subseteq P$, where $|Y|=|X|+t$.}
\Return \KLM($P ,w, X,  t ,  f_{\ell}$) \tcp{See Algorithm~\ref{1} and Definition~\ref{k-lines metric}}
\end{algorithm}
\begin{Theorem}[$k$-line-means' approximation]\label{badconst}
Let $k\geq2$ be an integer and let $[Y,Y']$ be the  output of a call to $\KLMo(P, w,\emptyset, k)$; See Algorithm~\ref{2a}. Let $L$ be a set such that for every $i\in[k]$, the $i$-th element of $L$ is a line that intersect the origin and the $i$-th element of $Y$. 
Then, \iap $1-\delta$,
\[
\cost_0(P,L)\leq \frac{1024}{\delta^2}(1+\ln(k))\opt_0(P,k,1).
\]
Moreover, $L$ can be computed in $O(nkd)$ time.
\end{Theorem}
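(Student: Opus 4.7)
The plan is to chain three ingredients already in place: Lemma~\ref{2dist}, which says $f_\ell$ is an $8$-distance; Theorem~\ref{mainthe}, the generic $O(\log k)$ guarantee for the seeding procedure $\KLM$ under any $\rho$-distance; and Lemma~\ref{2app}, which sandwiches $\cost_\ell$ between $\cost_0$ and $2\cost_0$ (and the analogous bound for $\opt$). The bound $\frac{1024}{\delta^2}(1+\ln k)$ is exactly $2\cdot 8\rho^2 = 2\cdot 8\cdot 64$, which telegraphs the intended combination: one factor of $8\rho^2$ from Theorem~\ref{mainthe} with $\rho=8$, and one factor of $2$ from Lemma~\ref{2app}(iii).

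First I would unfold Algorithm~\ref{2a}: the call $\KLMo(P,w,\emptyset,k)$ is exactly $\KLM(P,w,\emptyset,k,f_\ell)$. Combining Lemma~\ref{2dist} ($\rho = 8$) with Theorem~\ref{mainthe} yields, with probability at least $1-\delta$,
\[
\cost_\ell(P,Y) \leq \frac{8\cdot 64}{\delta^2}(1+\ln k)\,\opt_\ell(P,k) = \frac{512}{\delta^2}(1+\ln k)\,\opt_\ell(P,k).
\]
Next I would replace the point-set $Y$ by the corresponding set of lines $L$. The crucial observation is that $f_\ell(p,q)=\min\{\|\hat p-\hat q\|^2,\|\hat p+\hat q\|^2\}$ depends on $q$ only through the line $\ell(q)$ spanned by $q$ and the origin (since $\hat q$ and $-\hat q$ both lie on $\ell(q)$). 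Consequently $\cost_\ell(P,Y)=\cost_\ell(P,L)$. Applying the left inequality of Lemma~\ref{2app}(ii) gives $\cost_0(P,L)\leq \cost_\ell(P,L)$, and the right inequality of Lemma~\ref{2app}(iii) gives $\opt_\ell(P,k)\leq 2\,\opt_0(P,k,1)$, so
\[
\cost_0(P,L)\leq \cost_\ell(P,L) \leq \frac{512}{\delta^2}(1+\ln k)\,\opt_\ell(P,k) \leq \frac{1024}{\delta^2}(1+\ln k)\,\opt_0(P,k,1),
\]
as claimed.

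For the running time, Algorithm~\ref{1} performs $k$ rounds of $D^2$-style sampling. At each round, upon adding a new center $y_i$, every $p\in P$ needs $f_\ell(p,y_i)$, which costs $O(d)$ (normalize $p$ and $y_i$ and take the minimum of the two squared distances), and then $f_\ell(p,Y)$ is updated by taking a minimum with the previously stored value in $O(1)$. Sampling according to the resulting probabilities is $O(n)$. Thus each round costs $O(nd)$, giving $O(nkd)$ overall; turning each sampled $y_i$ into the line $\ell(y_i)$ is a trivial post-processing step.

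The only subtle point I expect to have to spell out carefully is the identification $\cost_\ell(P,Y)=\cost_\ell(P,L)$, because the seeding procedure $\KLM$ is stated as choosing points of $P$ while the theorem concerns lines; the bridge is exactly the sign-invariance of $f_\ell$ noted above. Everything else is an arithmetic combination of previously stated results, and since we are only using the \emph{upper} bounds in Lemma~\ref{2app}, nothing needs to be reproved about the geometry of lines.
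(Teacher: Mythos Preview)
Your proposal is correct and follows essentially the same route as the paper: invoke Lemma~\ref{2dist} to get $\rho=8$, apply Theorem~\ref{mainthe} for the $\frac{512}{\delta^2}(1+\ln k)$ bound on $\cost_\ell$, and then use the two sandwich inequalities of Lemma~\ref{2app} to pass to $\cost_0$ and $\opt_0$, picking up the final factor of~$2$. If anything you are more careful than the paper, which jumps directly to $\cost_\ell(P,L)$ without spelling out the $Y\to L$ identification and omits the $O(nkd)$ running-time argument entirely.
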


\begin{proof}
By Lemma~\ref{2dist}, $f_{\ell}$ is $8-distance$ function over $P$. Thus, since in this case $\rho=8$, from Theorem \ref{mainthe} we have that
$\cost_\ell(P,L)\leq \frac{512}{\delta^2}(1+\ln(k))opt_\ell(P,k)$.
Since, by Lemma \ref{2app}, $\cost_0(P,L)\leq\cost_\ell(P,L)$ and $\opt_\ell(P,k)\leq2\opt_0(P,k,1)$, we have that \iap $1-\delta$,
\begin{align*}
\cost_0(P,L)\leq&\cost_\ell(P,L)\\
&\leq\frac{512}{\delta^2}(1+\ln(k))\opt_\ell(P,k)\\
&\leq\frac{1024}{\delta^2}(1+\ln(k))\opt_0(P,k,1).
\end{align*}
\end{proof}

\section{Coresets for projecting on $k$ $j$-subspaces}
In this subsection we use the former results in order to prove an $\eps$-coreset (will be defined below) for projecting on $k$ $j$-subspaces; See Theorem \ref{mainth}.

\begin{Lemma}[\label{col8} Lemma 4 of \cite{statman2020k}]
Let $(P,f)$ be a $(\rho, \phi,\psi)$-metric.
For every set $Z\subseteq P$ we have
\[
|f(x,Z)-f(y,Z)|\leq (\phi+\psi\rho) f(x,y)+\psi\rho \min\br{f(x,Z),f(y,Z)}.
\]
\end{Lemma}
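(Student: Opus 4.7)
The plan is to reduce the set-valued statement to a pointwise one by taking a witness for $f(y,Z)$, then combine the $(\rho,\phi,\psi)$-metric inequality \eqref{fxx} with the $\rho$-distance triangle-like inequality \eqref{defr}.

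First, I would observe that since the claim is symmetric in $x$ and $y$ (using that $f$ is symmetric, per Definition \ref{rodis}), it suffices to treat the case $f(x,Z) \geq f(y,Z)$, so that $\min\br{f(x,Z),f(y,Z)} = f(y,Z)$ and the left-hand side equals $f(x,Z)-f(y,Z)$. I would then pick $z^* \in Z$ realizing the minimum for $y$, that is, $f(y,z^*) = f(y,Z)$. Since $z^* \in Z$, the definition of $f(x,Z) = \min_{z \in Z} f(x,z)$ gives $f(x,Z) \leq f(x,z^*)$. Therefore
\[
f(x,Z) - f(y,Z) \;\leq\; f(x,z^*) - f(y,z^*).
\]

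Next, I would apply \eqref{fxx} from Definition \ref{def:met} with the distinguished point $z = z^*$ to obtain
\[
f(x,z^*) - f(y,z^*) \;\leq\; \phi\, f(x,y) + \psi\, f(x,z^*).
\]
The stray $\psi f(x,z^*)$ on the right is the point where the $\rho$-distance property needs to enter. By \eqref{defr} applied to the triple $(x,y,z^*)$,
\[
f(x,z^*) \;\leq\; \rho\bigl(f(x,y) + f(y,z^*)\bigr) \;=\; \rho\, f(x,y) + \rho\, f(y,Z),
\]
where the last step uses $f(y,z^*)=f(y,Z)$. Substituting this bound back into the previous inequality yields
\[
f(x,Z) - f(y,Z) \;\leq\; \phi\, f(x,y) + \psi\rho\, f(x,y) + \psi\rho\, f(y,Z) \;=\; (\phi+\psi\rho)\, f(x,y) + \psi\rho\, f(y,Z),
\]
which is exactly the desired inequality in the case $\min\br{f(x,Z),f(y,Z)} = f(y,Z)$. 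The symmetric case is handled identically by swapping the roles of $x$ and $y$, and combining the two gives the absolute-value form.

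The only subtle point, and the main obstacle, is that the raw $(\rho,\phi,\psi)$-metric inequality \eqref{fxx} has a $\psi f(x,z)$ term that is pointwise in $z$ rather than in the set $Z$; if one naively tried to replace $z^*$ by a minimizer for $x$ instead, the chain would break. Choosing $z^*$ as the minimizer for the smaller of $f(x,Z),f(y,Z)$ (here $y$) is what makes $\rho$-distance convert $f(x,z^*)$ back into $f(y,Z)$ cleanly, and is the only nontrivial idea in the argument.
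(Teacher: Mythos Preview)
Your argument is correct. The paper itself does not supply a proof of this lemma; it is quoted verbatim as Lemma~4 of \cite{statman2020k}, so there is nothing in the present paper to compare against. Your route---take the witness $z^*$ for the smaller of $f(x,Z),f(y,Z)$, apply the pointwise $(\rho,\phi,\psi)$ bound \eqref{fxx}, then absorb the dangling $\psi f(x,z^*)$ via the $\rho$-distance inequality \eqref{defr}---is the natural one and is presumably what the cited reference does as well. The only implicit assumption you use is $\psi\geq 0$ (so that multiplying the $\rho$-distance inequality by $\psi$ preserves direction), which is part of Definition~\ref{def:met}.
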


\begin{Corollary}\label{costeq}
Let $C,Q\subseteq P$ and let  $f:P^2\to[0,\infty)$ be a function that holds the conditions of Lemma \ref{leolem}  with every $r>1$. Let $c\in\argmin_{c'\in C}f(p,c')$. Let $\psi\in(0,r-1)$. Then for every $p\in P$ the following holds. 
\begin{align}
&|f(p, Q)-f(c, Q)|\leq((\frac{r-1}{\psi})^{r-1}+\psi\cdot2^{r-1})\cdot f(p, C)+\psi\cdot2^{r-1}\min\{f(p, Q),f(c, Q)\}. \label{loglog2}
\end{align}
\end{Corollary}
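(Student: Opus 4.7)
The plan is to derive this as a direct corollary of Lemma~\ref{col8}, once we convert the Log-Log Lipschitz hypothesis into the $(\rho,\phi,\psi)$-metric structure that Lemma~\ref{col8} requires. The heavy lifting has already been done by Lemmas~\ref{leolem} and \ref{col8}; the corollary is essentially a bookkeeping step that specializes the general inequality to the case $y=c$ where $c$ is a nearest neighbor of $p$ in $C$.

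First, I would invoke Lemma~\ref{leolem} with the given $r>1$ and the same $\psi\in(0,r-1)$. Part~(i) of that lemma gives $\rho=\max\{2^{r-1},1\}=2^{r-1}$ (using $r>1$), and part~(ii) gives $\phi=\left(\tfrac{r-1}{\psi}\right)^{r-1}$. So $(P,f)$ is a $(\rho,\phi,\psi)$-metric with these explicit constants.

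Next I would apply Lemma~\ref{col8} to this metric with $x:=p$, $y:=c$, and $Z:=Q$. This yields
\[
|f(p,Q)-f(c,Q)|\leq (\phi+\psi\rho)\, f(p,c)+\psi\rho\,\min\{f(p,Q),f(c,Q)\}.
\]
By the choice of $c\in\argmin_{c'\in C}f(p,c')$ we have $f(p,c)=f(p,C)$, so the first term becomes $(\phi+\psi\rho)\,f(p,C)$.

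Finally, substituting $\rho=2^{r-1}$ and $\phi=\left(\tfrac{r-1}{\psi}\right)^{r-1}$ into $\phi+\psi\rho$ and $\psi\rho$ gives exactly the coefficients appearing in \eqref{loglog2}, completing the proof. There is no real obstacle here: the only thing to be careful about is matching the hypothesis $\psi\in(0,r-1)$ between the two lemmas (which is already assumed) and verifying that $\rho=2^{r-1}$ in the regime $r>1$ so that the final constant $\psi\cdot 2^{r-1}$ matches.
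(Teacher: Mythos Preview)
Your proposal is correct and follows exactly the paper's approach: the paper's proof is a one-line remark that plugging the constants guaranteed by Lemma~\ref{leolem} into Lemma~\ref{col8} yields the claim, and you have simply spelled out that substitution explicitly (including the identification $f(p,c)=f(p,C)$).
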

\begin{proof}
By plugging values guarnteed in Lemma \ref{costeq} in Lemma \ref{col8} we get the required. 
\end{proof}
\begin{Definition}\label{coreset}
Let $\varepsilon>0$. The set $C\subset\REAL^d$ is called an \KSC for $X$ if for every $S\subseteq\mbs(j)$; See Definition \ref{subspa}, we have
$$|\costs_0(P, S)-\costs_0(C, S)|\leq\eps\cdot\costs_0(P, S).$$
\end{Definition}

\begin{algorithm}
\SetKwInOut{Input}{Input}
\SetKwInOut{Output}{Output}
\caption{\CPC($P, w, a$)}
\label{3}
\Input{A finite set $P$, a function $w:P\to[0,\infty)$ and $a>0$.}
\Output{A tuple of sets $[C,C']$ such that $C\subseteq P$ and $C'_{new}\subset\REAL^d$, such that $|C'_{new}|=|C|$}
$C\gets\KLMo(P, w, \emptyset, 1)$\tcp{See Algorithm \ref{2a}}\label{Line1}
\While{$a\leq\cost_0(P,C,1)$\label{Line2}}
{
$C_{old}\gets C$\\
$C=\KLMo(P,w, C_{old}, 1)$\label{Line3}
}
$C'\gets C$\\
Compute the partition $\br{P_1,\cdots, P_{|Y|}}$ of $P$ over $C$ \tcp{See Definition \ref{def:costopt}}
\For{ every $i\in[|C|]$}
{$u\gets 0$\\
\For{ every $p\in P_i^*$}
{$u\gets u+w(p)$}
$c'_i\gets u\cdot c_i$\tcp{$c_i$ is the $i$-th point of $C$}} 
\Return $[C,C']$
\end{algorithm}

\begin{Theorem}[Coreset for non-affine Projective Clustering]\label{mainth}
Let $P$ be a set of $n$ points in $\REAL^d$ and let $w:P\to[0,\infty)$ be a function. Let $k\geq2$ and $j\in[d]$ be integers, and let $\eps>0$, $\alpha\geq1$ and $\psi\in(0,1)$. Let $\tilde{C}, |\tilde{C}|=k$ be an $\alpha$-approximaion of $\opt(P,k,j)$, i.e $\cost_0(P,\tilde{C},j)\leq\alpha\opt_0(P,k,j)$.
 Let $[C,C']$ be the output of a call to $\CPC(P,w,\eps\cost_0(P,\tilde{C},j))$; See Algorithm~\ref{3} and Definition \ref{subspa1}. 
Then $C'$ is an \emph{$\eps', \mbs(j)$)-coreset} for $k$ clustering by $j$-dimensional subspaces of $P$, where $$\eps'=(\frac{1}{\psi}+2\psi)\eps\alpha+2\psi.$$
Moreover, $C'$ has size $|C'|=O(m^*\log n)$ and can be computed in time $O(ndm^*\log n)$ where
$m^*\in[n]$ is the smallest integer $m$ such that $\opt_0(P,m,1)\leq\eps\alpha\opt_0(P,k,j)$. 
\end{Theorem}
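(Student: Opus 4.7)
The plan is to establish the coreset property first, then the size and runtime. Fix any $S\subseteq\mbs(j)$ with $|S|=k$ and let $\{P_1,\ldots,P_{|C|}\}$ be the partition produced by the algorithm, where each $p\in P_i$ has $c_i\in C$ as its nearest representative. By construction of $C'$ together with the scale-invariance $f_0(\lambda v,S)=\lambda^2 f_0(v,S)$ available for non-affine linear subspaces, I would first show that $\cost_0(C',S)=\sum_i u_i\,f_0(c_i,S)=\sum_i\sum_{p\in P_i}w(p)\,f_0(c_i,S)$, so that the coreset error decomposes as
\[
|\cost_0(P,S)-\cost_0(C',S)|\leq \sum_i\sum_{p\in P_i}w(p)\,|f_0(p,S)-f_0(c_i,S)|.
\]

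Next I would apply Corollary~\ref{costeq} with $r=2$: since $\tilde f_0(x)=x^2$ satisfies the Log-Log-Lipschitz condition with exponent $r=2$, Lemma~\ref{leolem} gives that $f_0$ is $(2,1/\psi,\psi)$-metric for any $\psi\in(0,1)$, and the corollary yields
\[
|f_0(p,S)-f_0(c_i,S)|\leq (1/\psi+2\psi)\,f_0(p,c_i)+2\psi\min\{f_0(p,S),f_0(c_i,S)\}.
\]
Summing across clusters, the $f_0(p,c_i)$ terms aggregate to $(1/\psi+2\psi)\cost_0(P,C,1)$ after the projection identification described below, which the while-loop termination bounds by $a\leq\eps\cost_0(P,\tilde C,j)\leq \eps\alpha\opt_0(P,k,j)\leq \eps\alpha\cost_0(P,S)$. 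The $\min$ terms aggregate to at most $2\psi\cost_0(P,S)$. Combining yields the target $\eps'=(1/\psi+2\psi)\eps\alpha+2\psi$.

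For the size, each pass through the while loop performs a single $D^2$-sampling step of $\KLMo$ in $O(nd)$ time, appending one new center to $C$. By Theorem~\ref{badconst} applied to the state of $C$ after iteration $|C|$, with high probability $\cost_0(P,C,1)=O(\log|C|)\cdot\opt_0(P,|C|,1)$. Since $\opt_0(P,m,1)$ is non-increasing in $m$ and by definition $\opt_0(P,m^*,1)\leq\eps\alpha\opt_0(P,k,j)$, choosing $|C|=\Theta(m^*\log n)$ forces the right-hand side below $a$ (absorbing the $\log$ factor through the definition of $m^*$ or into the $O(\cdot)$ constants), so the loop terminates. This gives $|C'|=|C|=O(m^*\log n)$ and total runtime $O(ndm^*\log n)$.

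The hard part is the apparent mismatch between the point-to-point distance $f_0(p,c_i)$ produced by the Log-Log-Lipschitz perturbation bound and the point-to-line distance $f_0(p,\ell(c_i))$ that is actually controlled by the termination condition $\cost_0(P,C,1)\leq a$; these can differ by an arbitrary factor when $c_i$ is far from the origin. I would resolve this by applying Corollary~\ref{costeq} not at the center $c_i$ itself but at the projection $\pi(p,\ell(c_i))$ of $p$ onto the line $\ell(c_i)$. Because this projection lies on the line through the origin and $c_i$, the scale-invariance of $f_0(\cdot,S)$ along $\ell(c_i)$ makes the collapse of an entire cluster onto a single weighted representative lossless, so the only residual perturbation is $\sum_p w(p)\,f_0(p,\pi(p,\ell(c_i)))=\cost_0(P,C,1)$, exactly the quantity the while loop drives below $a$. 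Executing this substitution cleanly, while preserving the constants $1/\psi+2\psi$ and $2\psi$, is the main technical step.
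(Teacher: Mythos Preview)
Your outline is essentially the paper's proof: for each $p$ write $c_p=\arg\min_{c\in C}f_0(p,c)$, expand
\[
|\cost_0(P,S)-\cost_0(C',S)|\leq\sum_{p}w(p)\,|f_0(p,S)-f_0(c_p,S)|,
\]
apply Corollary~\ref{costeq} with $r=2$ to get $(\tfrac{1}{\psi}+2\psi)\cost_0(P,C)+2\psi\cost_0(P,S)$, and then bound $\cost_0(P,C)$ by $\eps\alpha\cost_0(P,S)$ via the termination condition and the $\alpha$-approximation. For the size, the paper does not argue through Theorem~\ref{badconst} as you do; it simply invokes an external result (Theorem~11 of~\cite{bhattacharya2017k}) to conclude $|C|=O(m^*\log n)$ once $\cost_0(P,C)\leq\opt_0(P,m^*-1,1)$.

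On the point you flag as ``the hard part'': the paper does \emph{not} resolve it. It writes $\cost_0(P,C)$, the sum of squared distances to the nearest \emph{point} of $C$, and then bounds it by the while-loop threshold, which controls $\cost_0(P,C,1)$, the sum of squared distances to the nearest \emph{line} through a point of $C$. Since $\cost_0(P,C,1)\leq\cost_0(P,C)$ but not conversely, the paper's chain~\eqref{fi10}--\eqref{fii} is unjustified as written; you are right to isolate this.

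Your proposed patch, however, does not close the gap. The relation $f_0(\lambda v,S)=\lambda^2 f_0(v,S)$ is degree-$2$ homogeneity, not scale-invariance. If you apply Corollary~\ref{costeq} at $c=\pi(p,\ell(c_i))$ instead of $c_i$, the ``first'' term indeed becomes $f_0(p,\ell(c_i))$ and sums to $\cost_0(P,C,1)$, which is good; but the ``coreset side'' becomes $\sum_{p\in P_i}w(p)\,f_0(\pi(p,\ell(c_i)),S)$, and since each $\pi(p,\ell(c_i))$ sits at a \emph{different} scale along $\ell(c_i)$, these values are $\bigl(\|\pi(p,\ell(c_i))\|/\|c_i\|\bigr)^{2}f_0(c_i,S)$ and do \emph{not} collapse to $u_i\,f_0(c_i,S)$. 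So the substitution breaks the identity $\cost_0(C',S)=\sum_p w(p)\,f_0(c_p,S)$ that you used in the first step. (The same homogeneity issue, incidentally, also affects the very first identity if one reads the algorithm literally as $c'_i=u_i\cdot c_i$: then $f_0(c'_i,S)=u_i^{2}f_0(c_i,S)$, not $u_i f_0(c_i,S)$; both you and the paper are implicitly treating $C'$ as the weighted set $\{(c_i,u_i)\}$ rather than scaled points.)
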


\begin{proof}
For every $p\in P$ let $c_p=\argmin f_0(p,C)$. By Lemma \ref{costeq} we have that for every $S\subseteq\mb{S}(j)$ such that $|S|=k$,
\begin{align}
|\cost_0(P,S)-\cost_0(C',S)|
&=\left|\disum_{p\in P}w(p)\cdot f_0(p, S)-\disum_{p\in P}w(p)\cdot f_0(c_p, S)\right|\nonumber\\
&\leq\disum_{p\in P}w(p)\cdot\left| f_0(p, S)-f_0(c_p, S)\right|\label{dd0}\\
&\leq\disum_{p\in P}w(p)\left((\frac{1}{\psi}+2\psi)f_0(p,C)+2\psi\cdot w(p)f_0(p,S)\right)\label{dd1}\\
&=(\frac{1}{\psi}+2\psi)\disum_{p\in P}w(p)f_0(p,C)+2\psi\disum_{p\in P}w(p)f_0(p,S)\nonumber\\
&=(\frac{1}{\psi}+2\psi)\cost_0(P,C)+2\psi\cost_0(P,S),\label{dd4}
\end{align}
where \eqref{dd0} holds by the triangle inequality, \eqref{dd1} holds by plugging $r=2$ in Corollary \ref{costeq} which holds for $f_0$ since  for $\tilde{f}_0(x)=x^2$ we have that $\tilde  f_0(\Delta x)\leq\Delta^2\tilde f_0(x)$.
Since $C$ is received by $|C'|$ iterations of $\CPC$ algorithm we have that, 
\begin{align}
\cost_0(P, C)&\leq\eps\cost_0(P,\tilde{C},j)\label{fi10}\\
&\leq\eps\alpha\opt_0(P,k,j)\label{fi1i}\\
&\leq\eps\alpha\cost_0(P,S),\label{fii}
\end{align}
where \eqref{fi10} holds by definition of $\tilde{C}$, \eqref{fi1i} holds by the stop condition of the \CPC; See Algorithm \ref{3}. Plugging \eqref{fii} in \eqref{dd4} yields,
\begin{align}
|\costs_0(P, S)-\costs_0(C', S)|&\leq(\frac{1}{\psi}+2\psi)\cdot\eps\alpha\cost_0(P, S)+2\psi\cost_0(P, S).\nonumber\\
&\leq((\frac{1}{\psi}+2\psi)\cdot\eps\alpha+2\psi)\cost_0(P, S).\nonumber
\end{align}

We have that
\begin{align}
\cost(P,C)&\leq\eps\alpha\opt(P,k,j)\label{goodie}\\
&\leq\opt(P,m^*-1,1)\label{dooo},
\end{align}
where \eqref{goodie} holds by Line \ref{Line2} of Algorithm \CPC, and \eqref{dooo} holds by definition of $m^*$.
according to Theorem 11 of \cite{bhattacharya2017k} such inequality holds for $|C|=O(m^*\log n)$ and in time of $O(ndm^*\log n)$.

\end{proof}

\section{Experimental Results}\label{emp}

In this Section we compete the algorithm of \cite{cohen2015optimal} and use it as a coreset for $j$-subspace after our $\KLMo$ pre processing. First let us present the algorithm's lemma and pseudo-code. We call it CNW algorithm.

\begin{Lemma}\label{cnw2}\textsc{(Lemma 11 of \cite{cohen2015dimensionality})}
Let $X$ be a finite set in $\REAL^d$ and let $k\geq1$ be an integer and let $\eps\in(0,\haf]$. Let $C$ be the output of a call to \textsc{CNW}($X, k, \varepsilon$); See Algorithm \ref{4} and \cite{cohen2015optimal}. Then $C$ is an \KSC for $X$ and $|C|=\frac{k}{\eps^2}$. 
\end{Lemma}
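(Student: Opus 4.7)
The plan is to invoke Lemma~11 of \cite{cohen2015dimensionality} as a black box, since Algorithm~\ref{4} is the CNW construction of \cite{cohen2015optimal,cohen2015dimensionality} transcribed into our notation and no new analysis is performed in this subsection. The proof therefore reduces to a short compatibility check that the input/output conventions of Algorithm~\ref{4} and the coreset guarantee proved by CNW match the notion in Definition~\ref{coreset} specialized to $j=k$.

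First, I would verify that the object CNW certify --- a reweighted subset $C$ of the rows of $X$ satisfying $\abs{\norm{X - X\Pi}_F^2 - \norm{C - C\Pi}_F^2} \leq \eps \norm{X - X\Pi}_F^2$ for every orthogonal projection $\Pi \in \REAL^{d\times d}$ onto a $k$-dimensional linear subspace --- is exactly an \KSC in the sense of Definition~\ref{coreset}. This is immediate: for a linear (non-affine) $k$-subspace $S$ with orthogonal projection $\Pi_S$, we have $\cost_0(X,S) = \norm{X - X\Pi_S}_F^2$, and similarly for $C$ once points are identified with the corresponding weighted rows. Under this identification, the relative-error inequality required by Definition~\ref{coreset} and the one guaranteed by the CNW construction are literally the same inequality, quantified over the same family $\mbs(k)$.

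Second, I would record the quantitative content: Lemma~11 of \cite{cohen2015dimensionality} produces a weighted subset of size $O(k/\eps^2)$ with the above guarantee for every $\eps \in (0,\haf]$, which is the $|C|=k/\eps^2$ bound claimed in our statement (constants absorbed, matching the convention of the cited lemma). No additional size analysis is needed.

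The main obstacle --- if one were to re-derive the result instead of citing it --- lies in the iterative BSS-style potential argument of \cite{batson2012twice} that \cite{cohen2015optimal} adapt to the row-sampling regime, tracking the extreme eigenvalues of a running sum of rank-one updates via barrier potentials. Reproducing this analysis is outside the scope of this section, whose sole purpose is to benchmark our $\KLMo$-based coreset against the CNW coreset in the experiments of Section~\ref{emp}; hence the cleanest and intended proof is a one-line application of the cited lemma once the definitional translation above is made.
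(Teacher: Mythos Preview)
Your proposal is correct and matches the paper: Lemma~\ref{cnw2} is stated in the paper as a direct citation of \cite{cohen2015dimensionality,cohen2015optimal} with no accompanying proof, so the intended ``proof'' is exactly the black-box invocation plus definitional compatibility check you describe. If anything, you have done slightly more than the paper, which does not even spell out the translation between the CNW projection-cost guarantee and Definition~\ref{coreset}.
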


\begin{algorithm}
\SetKwInOut{Input}{Input}
\SetKwInOut{Output}{Output}
\caption{\textsc{CNW}($P, k, \varepsilon$)}
\label{4}
\Input{$P\subset\mathbb{R}^{d}$ where $|P|=\ell$, an integer $k\geq1$ and $\varepsilon\in[0,\frac{1}{2}]$}
\Output{$C\subset\mathbb{R}^{d}$ where $|C|=\frac{k}{\eps^2}$}
$\mathbf{UDV^T}\leftarrow$ The $SVD$ of $\mathbf{P}$ \tcp{$\mathbf{P}$ is a $|P|\times d$ matrix in which for every $i\in|P|$, the $i$-th row is the $i$-th point of $P$}
$\mathbf{Q}\leftarrow \mathbf{U_{*,1:k}D_{1:k,1:k}V_{*,1:k}^T}$\\
$\mathbf{Z}\leftarrow{\mathbf{\tilde{V}}_{*,1:2k}}$   (*)\\
$\mathbf{A_2}\leftarrow\frac{\sqrt k}{\norm{\mathbf{P^T-Q}}_F}\cdot(\mathbf{P-ZZ^TP})$\\
$\mathbf{A}\leftarrow\mathbf{A_2|Z}$\\
$\mathbf{X_u}\leftarrow \mathbf{kI}$\\
$\mathbf{X_\ell}\leftarrow \mathbf{-kI}$\\
$\delta_u\leftarrow\varepsilon+2\varepsilon^2$\\
$\delta_\ell \leftarrow\varepsilon-2\varepsilon^2$\\
$\mathbf{r}\leftarrow 0^{\ell\times1}$\\
$\mathbf{Z}\leftarrow 0^{d\times d}$\\
\For{every $i\in[\lceil \frac{k}{\varepsilon^2}]$}
{
$\mathbf{X_u}\leftarrow \mathbf{X_u}+\delta_u\mathbf{A^TA}$\\
$\mathbf{X_\ell}\leftarrow \mathbf{X_\ell}+\delta_u\mathbf{A^TA}$\\
$\mathbf{M_\ell}\leftarrow (\mathbf{Z}-\mathbf{X_\ell})^{-1}$\label{Line11}\\
$\mathbf{M_u}\leftarrow (\mathbf{X_u}-\mathbf{Z})^{-1}$\\
$\mathbf{N_\ell}\leftarrow\mathbf{AM_\ell A^T}$ \label{Line22}\\ 
$\mathbf{N_u}\leftarrow\mathbf{AM_uA^T}$\\
$\mathbf{L}\leftarrow\frac{\mathbf{N_\ell^2}}{\delta_\ell\cdot \mathrm{tr}(\mathbf{N_\ell^2})}-\mathbf{N_\ell}$\label{Line33}\\
$\mathbf{U}\leftarrow\frac{\mathbf{N_u^2}}{\delta_u\cdot \mathrm{tr}(\mathbf{N_u^2})}-\mathbf{N_u}$\\
$j\leftarrow\mathrm{argmax}\bigl(\mathrm{Diag}(\mb{L})-\mathrm{Diag}(\mb{U})\bigr)$\\
$r_j\leftarrow r_j+\frac{1}{\mathbf{U_{jj}}}$\\
$\mathbf{a}\leftarrow \mathbf{A_{j,*}}$\\
$\mathbf{Z} \leftarrow \mathbf{Z}+r_j\mathbf{a^Ta}$
}
$C=\{r_iq_i|r_i\neq0\}_{i=1}^\ell$\\
\Return $C$\\

{ (*)$\mathbf{\tilde{V}_{:,1:2k}}$ is $\mathbf{V_{:,1:2k}}$ or any other $\mathbf{V_{:,1:2k}}$ approximation matrix that holds $\norm{\mathbf{P-ZZ^TP}}^2\leq2\norm{\mathbf{P^T-Q}}^2$ and $\norm{\mathbf{P-ZZ^TP}}_2^2\leq\frac{2}{k}\norm{\mathbf{P^T-Q}}_F^2$}

\end{algorithm}

\begin{algorithm}
\SetKwInOut{Input}{Input}
\SetKwInOut{Output}{Output}
\caption{\textsc{$k$ $j$-Subspace Fixed-size Coreset}($P, w, k, \varepsilon,\opt_0(P,k,j)$)}
\label{5}
\Input{$P\subset\mathbb{R}^{d}$ where $|P|=\ell$, $w:P\to[0,\infty)$, an integer $k\geq1$, $\varepsilon\in(0,1]$ and $\opt_0(P,k,j)$; See Definition \ref{subspa}}
\Output{$C\subset\mathbb{R}^{d}$ where $|C|=\lceil\frac{4k}{\eps^2}\rceil$}
$[Q,\_]=\CPC(P, w, \frac{\varepsilon}{2}\opt_0(P,k,j))$\tcp{See Algorithm \ref{3}}
$C=\textsc{CNW}(Q,k,\frac{\varepsilon}{2})$\tcp{See Algorithm \ref{4}}
\Return $C$\\

\end{algorithm}

We implemented Algorithm \ref{4} and \ref{5} Python 3.6 via the libraries Numpy and Scipy.sparse. We then run experimental results that we summarize in this section.

\subsection{Off-line results}\label{pcaexp}
We use the following three datasets:

(i) Gyroscope data- Have been collected by \cite{anguita2013public} and can be found in \cite{smart}. The experiments have been carried out with a group of 30 volunteers within an age bracket of 19-48 years. Each person performed six activities (WALKING, WALKING UPSTAIRS, WALKING DOWNSTAIRS, SITTING, STANDING, LAYING) wearing a smartphone (Samsung Galaxy S II) on the waist. Using its embedded gyroscope, we captured 3-axial angular velocity at a constant rate of 50Hz. The experiments have been video-recorded to label the data manually.
Data was collected from 7352 measurements, we took the first 1000 points. Each instance consists of measurements from 3 dimensions, $x$, $y$, $z$, each in a dimension of 128.

(ii) Same as (i) but for embedded accelerometer data (3-axial linear accelerations). Again, we took the first 1000 points.

(iii) MNIST test data, first 1000 images. 

\paragraph{Algorithms.}
The algorithms we compared are Uniform sampling, CNW (Algorithm \ref{4}) and Algorithm \ref{5}. In all datasets we ran the experiments with $k=5$ and $k=10$.
\paragraph{Hardware.}
A desktop, with an Intel i7-6850K CPU @ 3.60GHZ
64GB RAM. 
\paragraph{Results. }
We ran those algorithms on those six datasets with different sizes of coresets, between 1000 to 7000, and compared the received error. The error we determined was calculated by the formula $\frac{|\norm{A-AV_AV_A^T}^2-\norm{A-AV_CV_C^T}^2|}{\norm{A-AV_AV_A^T}^2}$, where $A$ is the original data matrix, $V_A$ is the optimal subspace received by SVD on $A$, and $V_C$ is the optimal subspace received by SVD on the received coreset. Results of gyroscope data are presented in Figure \ref{F1} and results of accelerometer data are presented in Figure \ref{F2}.
\paragraph{Discussion. }
 One can notice in Figures \ref{F1} and \ref{F2} the significant differences between the two SVD algorithm than uniform sampling. Relating to times, there is significant difference between our algorithm to CNW. For MNIST; See Figure \ref{F3}, indeed CNW get less error, but also there one should consider to use ours when taking times into account.

 	\begin{figure}[h!]
		\begin{subfigure}[h]{0.33\textwidth}
			\centering
			\includegraphics[width=\textwidth]{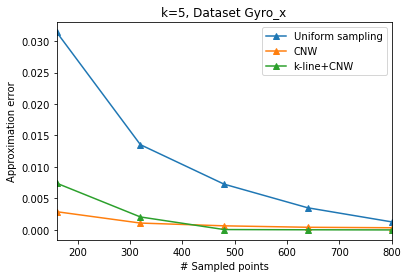}
			\caption{\label{klinemeansa} $x$ error, $k=5$}
		\end{subfigure}
		\begin{subfigure}[h]{0.33\textwidth}
			\centering
			\includegraphics[width=\textwidth]{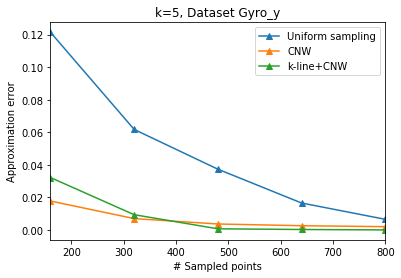}
			\caption{\label{klinemeansa} $y$ error, $k=5$}
		\end{subfigure}
		\begin{subfigure}[h]{0.33\textwidth}
			\centering
			\includegraphics[width=\textwidth]{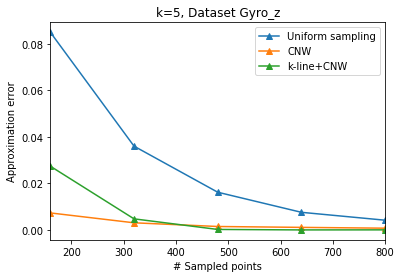}
			\caption{\label{klinemeansa} $z$ error, $k=5$}
		\end{subfigure}
		
		\par\bigskip
		
		\begin{subfigure}[h]{0.33\textwidth}
			\centering
			\includegraphics[width=\textwidth]{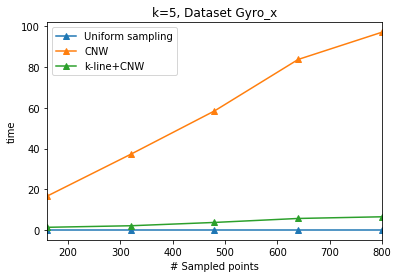}
			\caption{\label{klinemeansa} $x$ times, $k=5$}
		\end{subfigure}
		\begin{subfigure}[h]{0.33\textwidth}
			\centering
			\includegraphics[width=\textwidth]{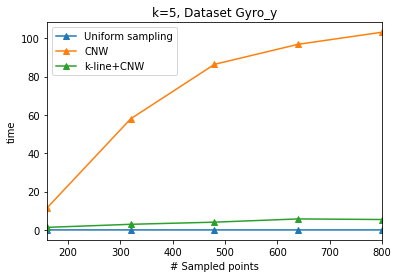}
			\caption{\label{klinemeansa} $y$ times, $k=5$}
		\end{subfigure}
		\begin{subfigure}[h]{0.33\textwidth}
			\centering
			\includegraphics[width=\textwidth]{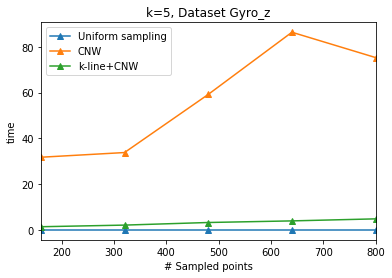}
			\caption{\label{klinemeansa} $z$ times, $k=5$}
		\end{subfigure}
		
		\par\bigskip
		
		\begin{subfigure}[h]{0.33\textwidth}
			\centering
			\includegraphics[width=\textwidth]{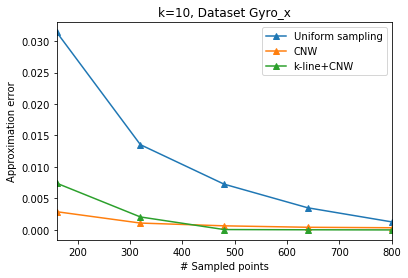}
			\caption{\label{klinemeansa} $x$ error,$k=10$}
		\end{subfigure}
			\begin{subfigure}[h]{0.33\textwidth}
			\centering
			\includegraphics[width=\textwidth]{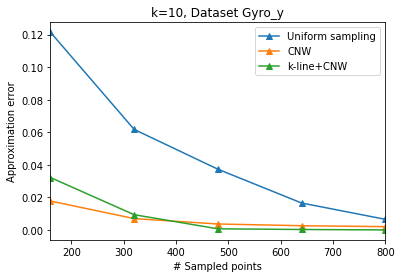}
			\caption{\label{klinemeansa} $y$ error,$k=10$}
		\end{subfigure}
		\begin{subfigure}[h]{0.33\textwidth}
			\centering
			\includegraphics[width=\textwidth]{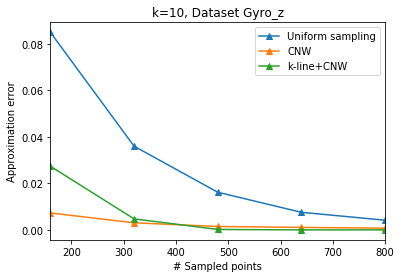}
			\caption{\label{klinemeansa} $z$ error,$k=10$}
		\end{subfigure}

		\par\bigskip

		\begin{subfigure}[h]{0.33\textwidth}
			\centering
			\includegraphics[width=\textwidth]{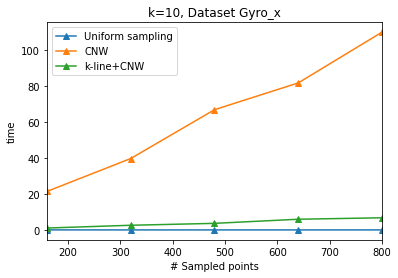}
			\caption{\label{klinemeansa} $x$ times,$k=10$}
		\end{subfigure}
			\begin{subfigure}[h]{0.33\textwidth}
			\centering
			\includegraphics[width=\textwidth]{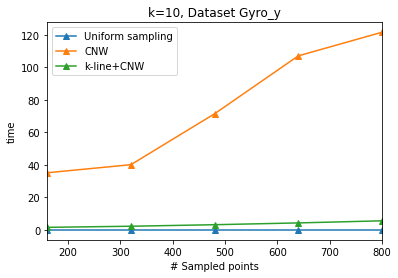}
			\caption{\label{klinemeansa} $y$ times,$k=10$}
		\end{subfigure}
		\begin{subfigure}[h]{0.33\textwidth}
			\centering
			\includegraphics[width=\textwidth]{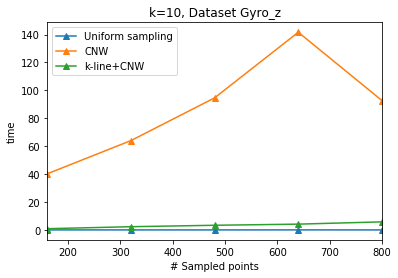}
			\caption{\label{klinemeansa} $z$ times ,$k=10$}
		\end{subfigure}

		\par\bigskip
		
\caption{\small\label{F1} Result of the experiments that are described in Subsection \ref{pcaexp} on gyroscope data for the three sampling algorithms: uniform, CNW(Algorithm \ref{4}) and Algorithm \ref{5}.}
\end{figure}

 	\begin{figure}[h!]
		\begin{subfigure}[h]{0.33\textwidth}
			\centering
			\includegraphics[width=\textwidth]{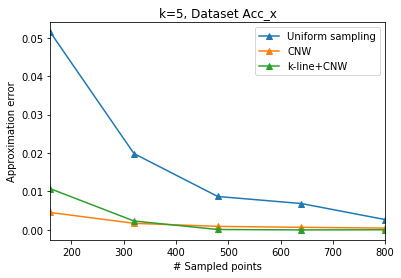}
			\caption{\label{klinemeansa} $x$ error,$k=5$}
		\end{subfigure}
		\begin{subfigure}[h]{0.33\textwidth}
			\centering
			\includegraphics[width=\textwidth]{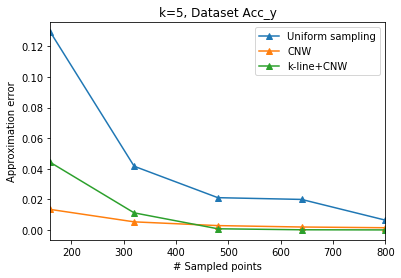}
			\caption{\label{klinemeansa} $y$ error,$k=5$}
		\end{subfigure}
		\begin{subfigure}[h]{0.33\textwidth}
			\centering
			\includegraphics[width=\textwidth]{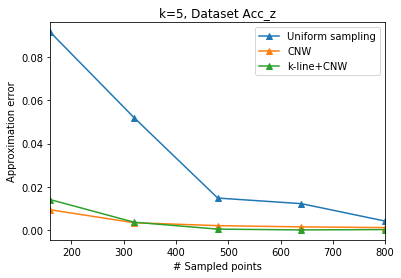}
			\caption{\label{klinemeansa} $z$ error,$k=5$}
		\end{subfigure}
		
		\par\bigskip

		\begin{subfigure}[h]{0.33\textwidth}
			\centering
			\includegraphics[width=\textwidth]{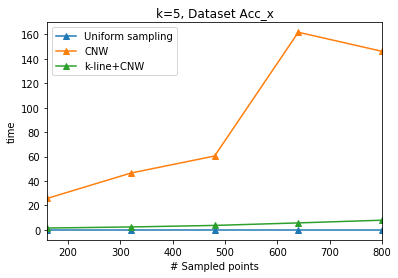}
			\caption{\label{klinemeansa} $x$ times,$k=5$}
		\end{subfigure}
		\begin{subfigure}[h]{0.33\textwidth}
			\centering
			\includegraphics[width=\textwidth]{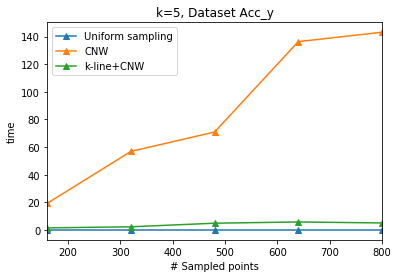}
			\caption{\label{klinemeansa} $y$ times,$k=5$}
		\end{subfigure}
		\begin{subfigure}[h]{0.33\textwidth}
			\centering
			\includegraphics[width=\textwidth]{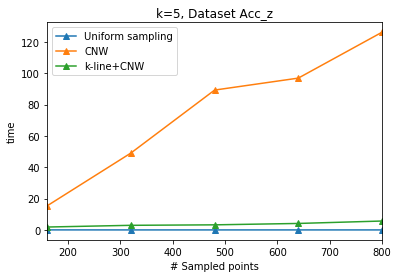}
			\caption{\label{klinemeansa} $z$ times,$k=5$}
		\end{subfigure}

		\par\bigskip

		\begin{subfigure}[h]{0.33\textwidth}
			\centering
			\includegraphics[width=\textwidth]{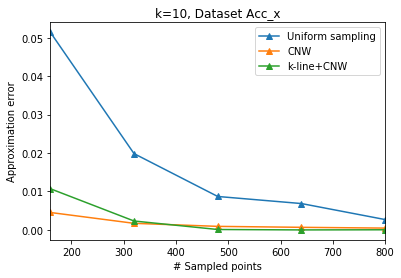}
			\caption{\label{klinemeansa} $x$ error,$k=10$}
		\end{subfigure}
		\begin{subfigure}[h]{0.33\textwidth}
			\centering
			\includegraphics[width=\textwidth]{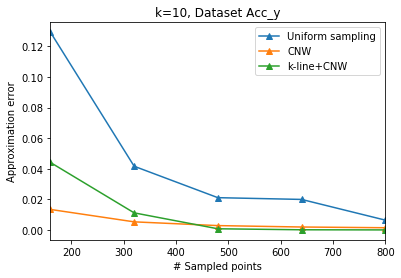}
			\caption{\label{klinemeansa} $y$ error,$k=10$}
		\end{subfigure}
		\begin{subfigure}[h]{0.33\textwidth}
			\centering
			\includegraphics[width=\textwidth]{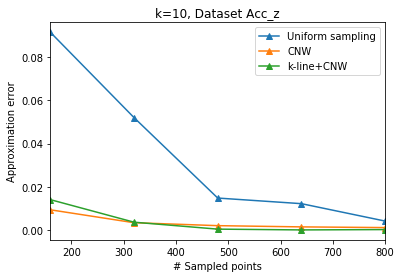}
			\caption{\label{klinemeansa} $z$ error,$k=10$}
		\end{subfigure}

		\par\bigskip

		\begin{subfigure}[h]{0.33\textwidth}
			\centering
			\includegraphics[width=\textwidth]{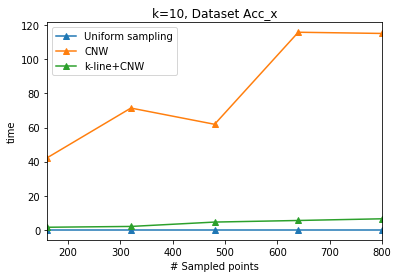}
			\caption{\label{klinemeansa} $x$ times,$k=10$}
		\end{subfigure}
		\begin{subfigure}[h]{0.33\textwidth}
			\centering
			\includegraphics[width=\textwidth]{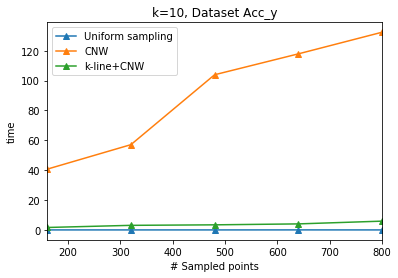}
			\caption{\label{klinemeansa} $y$ times,$k=10$}
		\end{subfigure}
		\begin{subfigure}[h]{0.33\textwidth}
			\centering
			\includegraphics[width=\textwidth]{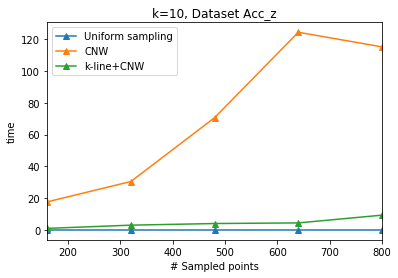}
			\caption{\label{klinemeansa} $z$ times,$k=10$}
		\end{subfigure}
		
		\par\bigskip
		
\caption{\small\label{F2}Result of the experiments that are described in Subsection \ref{pcaexp} on accelerometer data for the three sampling algorithms: uniform, CNW(Algorithm \ref{4}) and Algorithm \ref{5}.}
\end{figure}

 	\begin{figure}[h!]
		\begin{subfigure}[h]{0.5\textwidth}
		\centering
		\includegraphics[width=\textwidth]{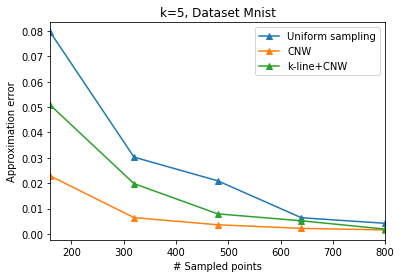}
		\caption{\label{klinemeansa} Error,$k=5$}
	\end{subfigure}
	\begin{subfigure}[h]{0.5\textwidth}
		\centering
		\includegraphics[width=\textwidth]{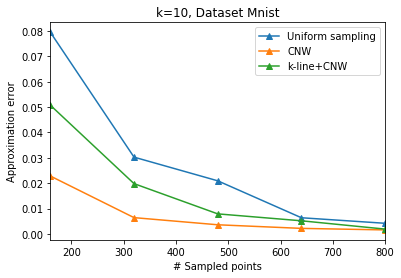}
		\caption{\label{klinemeansa} Error,$k=10$}
	\end{subfigure}
		\begin{subfigure}[h]{0.5\textwidth}
		\centering
		\includegraphics[width=\textwidth]{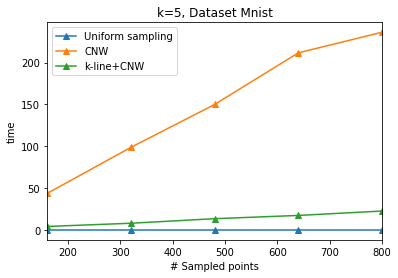}
		\caption{\label{klinemeansa} Time,$k=5$}
	\end{subfigure}
	\begin{subfigure}[h]{0.5\textwidth}
		\centering
		\includegraphics[width=\textwidth]{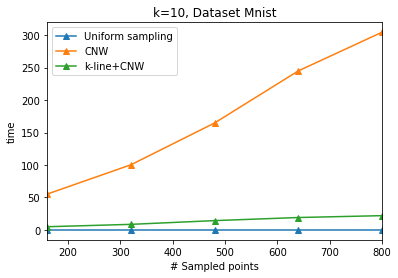}
		\caption{\label{klinemeansa} Time,$k=10$}
	\end{subfigure}

	\par\bigskip

\caption{\small\label{F3}Result of the experiments that are described in Subsection \ref{pcaexp} on MNIST data for the three sampling algorithms: uniform, CNW(Algorithm \ref{4}) and Algorithm \ref{5}.}
\end{figure}

\subsection{Big Data Results}\label{svdexp}

\paragraph{Wikipedia Dataset}
We created a document-term matrix of Wikipedia (parsed enwiki-latest-pages-articles.xml.bz2 from \cite{wic2019}), i.e. sparse matrix with 4624611 rows and 100k columns where each cell $(i,j)$ equals the value of how many appearances the word number $j$ has in article number $i$. We use a standard dictionary of the 100k most common words in Wikipedia found in \cite{dic2012}.

 	\begin{figure}[h!]
		\begin{subfigure}[h]{\scal\textwidth}
		\centering
		\includegraphics[scale=\scal]{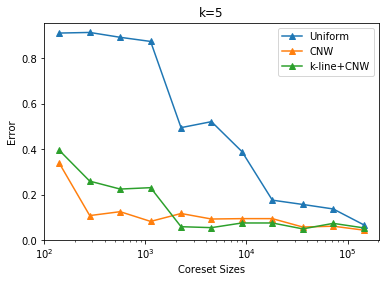}
		\caption{\label{klinemeansa} Error ,$k=5$}
	\end{subfigure}
	\begin{subfigure}[h]{\scal\textwidth}
		\centering
		\includegraphics[scale=\scal]{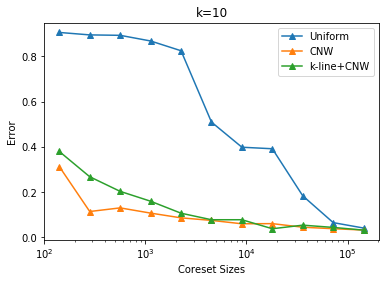}
		\caption{\label{klinemeansa} Error ,$k=10$}
	\end{subfigure}
	\par\bigskip
		\begin{subfigure}[h]{0.5\textwidth}
		\centering
		\includegraphics[scale=0.65]{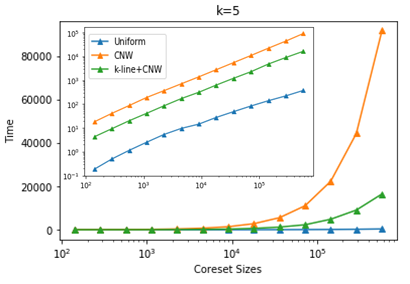}
		\caption{\label{wt3} Times,$k=5$}
	\end{subfigure}
	\begin{subfigure}[h]{0.5\textwidth}
		\centering
		\includegraphics[scale=0.65]{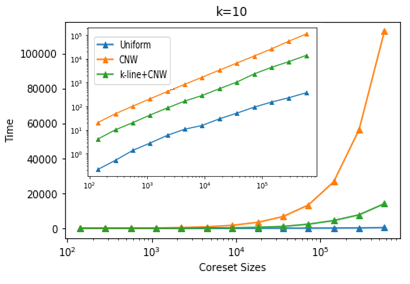}
		\caption{\label{wt4} Times,$k=10$}
	\end{subfigure}
\par\bigskip

\caption{\small\label{F4}Result of the experiments that are described in Subsection \ref{svdexp} on Wikipedia data for the three sampling algorithms: uniform, uniform, CNW(Algorithm \ref{4}) and Algorithm \ref{5}. Time units are [sec]. Small figures within \ref{wt3} and \ref{wt4} are same results as the large and $y$-logged.}
\end{figure}

\paragraph{Our tree system}
Our system separates the $n$ points of the data into chunks of a desired size of coreset, called $m$. It uses consecutive chunks of the data, merge each pair of them, and uses a desired algorithm in order to reduce their dimensionality to a half. The process is described well in \cite{feldman2010coresets}. The result is a top coreset of the whole data.
We built such a system. We used 14 floors for our system, thus divided the n=4624611 points used into 32768 chunks where each chunk, including the top one, is in a size of 141.

\paragraph{Johnson-Lindenstrauss transform}
In order to accelerate the process, one can apply on it a Johnson-Lindenstrauss (JL; see \cite{johnson1984extensions}) transform within the blocks. In our case, we multiplied this each chunk from the BOW matrix by a randomized matrix of $100K$ rows and $d$ columns, and got a dense matrix of $n$ rows as the leaf size and $d$ columns where equals to $k\cdot\log(n)=k\cdot6$,  since analytically proven well-bounded JL transform matrix is of a constant times of $\ln(n)$ (see \cite{johnson1984extensions}) and indeed $\lfloor\ln(141)\rfloor=6$.

\paragraph{Algorithms.}
Same as in Subsection \ref{pcaexp}, the algorithms we compared are Uniform sampling, CNW; See Algorithm \ref{4} and Algorithm \ref{5}.

\paragraph{Hardware.}
Same as in Subsection \ref{pcaexp}
\paragraph{Results. }
We compare the error received for the different algorithms. We show the results in Figure \ref{F4} in $x$-logarithmic scale since the floors' sizes differ multiplicatively.
For every floor, we concatenated the leaves of the floor and measured the error between this subset to the original data. The error we determined was calculated by the formula $$\frac{\norm{A-AV_C^TV_C}^2-\norm{A-AV_A^TV_A}^2|}{\norm{A-AV_A^TV_A}}$$, where $A$ is the original data matrix, $V_A$ received by SVD on A, and $V_C$ received by SVD on the data concatenated in the desired floor. Also here we ran with both $k=5$ and $k=10$.

\paragraph{Discussion. }
 One can notice in Figures \ref{F4} the significant error differences between Uniform sampling, and the coreset techniques. Relating times, one should see from \ref{wt3} and \ref{wt4} that our algorithm is executed in an order og magnitude faster than CNW.
\bibliographystyle{abbrv}
\bibliography{proposal_bib1.bbl}

\begin{thebibliography}{10}

\bibitem{dic2012}
\url{https://gist.github.com/h3xx/1976236}, 2012.

\bibitem{wic2019}
\url{https://dumps.wikimedia.org/enwiki/latest/}, 2019.

\bibitem{anguita2013public}
D.~Anguita, A.~Ghio, L.~Oneto, X.~Parra, and J.~L. Reyes-Ortiz.
\newblock A public domain dataset for human activity recognition using
  smartphones.
\newblock In {\em Esann}, 2013.

\bibitem{arthur2007k}
D.~Arthur and S.~Vassilvitskii.
\newblock k-means++: The advantages of careful seeding.
\newblock In {\em Proceedings of the eighteenth annual ACM-SIAM symposium on
  Discrete algorithms}, pages 1027--1035. Society for Industrial and Applied
  Mathematics, 2007.

\bibitem{batson2012twice}
J.~Batson, D.~A. Spielman, and N.~Srivastava.
\newblock Twice-ramanujan sparsifiers.
\newblock {\em SIAM Journal on Computing}, 41(6):1704--1721, 2012.

\bibitem{bhattacharya2017k}
A.~Bhattacharya and R.~Jaiswal.
\newblock On the k-means/median cost function.
\newblock {\em arXiv preprint arXiv:1704.05232}, 2017.

\bibitem{braverman2016new}
V.~Braverman, D.~Feldman, and H.~Lang.
\newblock New frameworks for offline and streaming coreset constructions.
\newblock {\em arXiv preprint arXiv:1612.00889}, 2016.

\bibitem{clarkson2009numerical}
K.~L. Clarkson and D.~P. Woodruff.
\newblock Numerical linear algebra in the streaming model.
\newblock In {\em Proceedings of the forty-first annual ACM symposium on Theory
  of computing}, pages 205--214. ACM, 2009.

\bibitem{clarkson2017low}
K.~L. Clarkson and D.~P. Woodruff.
\newblock Low-rank approximation and regression in input sparsity time.
\newblock {\em Journal of the ACM (JACM)}, 63(6):54, 2017.

\bibitem{cohen2015dimensionality}
M.~B. Cohen, S.~Elder, C.~Musco, C.~Musco, and M.~Persu.
\newblock Dimensionality reduction for k-means clustering and low rank
  approximation.
\newblock In {\em Proceedings of the Forty-Seventh Annual ACM on Symposium on
  Theory of Computing}, pages 163--172. ACM, 2015.

\bibitem{cohen2015optimal}
M.~B. Cohen, J.~Nelson, and D.~P. Woodruff.
\newblock Optimal approximate matrix product in terms of stable rank.
\newblock {\em arXiv preprint arXiv:1507.02268}, 2015.

\bibitem{deshpande2010efficient}
A.~Deshpande and L.~Rademacher.
\newblock Efficient volume sampling for row/column subset selection.
\newblock In {\em 2010 IEEE 51st Annual Symposium on Foundations of Computer
  Science}, pages 329--338. IEEE, 2010.

\bibitem{deshpande2011algorithms}
A.~Deshpande, M.~Tulsiani, and N.~K. Vishnoi.
\newblock Algorithms and hardness for subspace approximation.
\newblock In {\em Proceedings of the twenty-second annual ACM-SIAM symposium on
  Discrete Algorithms}, pages 482--496. Society for Industrial and Applied
  Mathematics, 2011.

\bibitem{deshpande2006adaptive}
A.~Deshpande and S.~Vempala.
\newblock Adaptive sampling and fast low-rank matrix approximation.
\newblock In {\em Approximation, Randomization, and Combinatorial Optimization.
  Algorithms and Techniques}, pages 292--303. Springer, 2006.

\bibitem{feldman2010coresets}
D.~Feldman, M.~Monemizadeh, C.~Sohler, and D.~P. Woodruff.
\newblock Coresets and sketches for high dimensional subspace approximation
  problems.
\newblock In {\em Proceedings of the twenty-first annual ACM-SIAM symposium on
  Discrete Algorithms}, pages 630--649. Society for Industrial and Applied
  Mathematics, 2010.

\bibitem{feldman2013turning}
D.~Feldman, M.~Schmidt, and C.~Sohler.
\newblock Turning big data into tiny data: Constant-size coresets for k-means,
  pca and projective clustering.
\newblock In {\em Proceedings of the Twenty-Fourth Annual ACM-SIAM Symposium on
  Discrete Algorithms}, pages 1434--1453. SIAM, 2013.

\bibitem{feldman2016dimensionality}
D.~Feldman, M.~Volkov, and D.~Rus.
\newblock Dimensionality reduction of massive sparse datasets using coresets.
\newblock In {\em Advances in Neural Information Processing Systems}, pages
  2766--2774, 2016.

\bibitem{golub1965calculating}
G.~Golub and W.~Kahan.
\newblock Calculating the singular values and pseudo-inverse of a matrix.
\newblock {\em Journal of the Society for Industrial and Applied Mathematics,
  Series B: Numerical Analysis}, 2(2):205--224, 1965.

\bibitem{golub1971singular}
G.~H. Golub and C.~Reinsch.
\newblock Singular value decomposition and least squares solutions.
\newblock In {\em Linear Algebra}, pages 134--151. Springer, 1971.

\bibitem{halko2011finding}
N.~Halko, P.-G. Martinsson, and J.~A. Tropp.
\newblock Finding structure with randomness: Probabilistic algorithms for
  constructing approximate matrix decompositions.
\newblock {\em SIAM review}, 53(2):217--288, 2011.

\bibitem{johnson1984extensions}
W.~B. Johnson and J.~Lindenstrauss.
\newblock Extensions of lipschitz mappings into a hilbert space.
\newblock {\em Contemporary mathematics}, 26(189-206):1, 1984.

\bibitem{lloyd1982least}
S.~Lloyd.
\newblock Least squares quantization in pcm.
\newblock {\em IEEE transactions on information theory}, 28(2):129--137, 1982.

\bibitem{mahoney2011randomized}
M.~W. Mahoney et~al.
\newblock Randomized algorithms for matrices and data.
\newblock {\em Foundations and Trends{\textregistered} in Machine Learning},
  3(2):123--224, 2011.

\bibitem{nguyen2009fast}
N.~H. Nguyen, T.~T. Do, and T.~D. Tran.
\newblock A fast and efficient algorithm for low-rank approximation of a
  matrix.
\newblock In {\em Proceedings of the forty-first annual ACM symposium on Theory
  of computing}, pages 215--224. ACM, 2009.

\bibitem{sarlos2006improved}
T.~Sarlos.
\newblock Improved approximation algorithms for large matrices via random
  projections.
\newblock In {\em 2006 47th Annual IEEE Symposium on Foundations of Computer
  Science (FOCS'06)}, pages 143--152. IEEE, 2006.

\bibitem{shyamalkumar2012efficient}
N.~D. Shyamalkumar and K.~Varadarajan.
\newblock Efficient subspace approximation algorithms.
\newblock {\em Discrete \& Computational Geometry}, 47(1):44--63, 2012.

\bibitem{statman2020k}
A.~Statman, L.~Rozenberg, and D.~Feldman.
\newblock k-means+++: Outliers-resistant clustering.
\newblock 2020.

\bibitem{stewart1993early}
G.~W. Stewart.
\newblock On the early history of the singular value decomposition.
\newblock {\em SIAM review}, 35(4):551--566, 1993.

\bibitem{smart}
UCI.
\newblock
  \url{https://archive.ics.uci.edu/ml/datasets/human+activity+recognition+using+smartphones}.

\end{thebibliography}


@article{stewart1993early,
  title={On the early history of the singular value decomposition},
  author={Stewart, Gilbert W},
  journal={SIAM review},
  volume={35},
  number={4},
  pages={551--566},
  year={1993},
  publisher={SIAM}
}

@article{statman2020k,
  title={k-Means+++: Outliers-Resistant Clustering},
  author={Statman, Adiel and Rozenberg, Liat and Feldman, Dan},
  year={2020},
  publisher={Preprints}
}
@inproceedings{chen2008constant,
  title={A constant factor approximation algorithm for k-median clustering with outliers},
  author={Chen, Ke},
  booktitle={Proceedings of the nineteenth annual ACM-SIAM symposium on Discrete algorithms},
  pages={826--835},
  year={2008}
}
@article{bhattacharya2017k,
  title={On the k-Means/Median Cost Function},
  author={Bhattacharya, Anup and Jaiswal, Ragesh},
  journal={arXiv preprint arXiv:1704.05232},
  year={2017}
}

@inproceedings{anguita2013public,
  title={A public domain dataset for human activity recognition using smartphones.},
  author={Anguita, Davide and Ghio, Alessandro and Oneto, Luca and Parra, Xavier and Reyes-Ortiz, Jorge Luis},
  booktitle={Esann},
  year={2013}
}
@MISC{smart,
author={UCI},
month={},
year = {},
howpublished={\url{https://archive.ics.uci.edu/ml/datasets/human+activity+recognition+using+smartphones}}
}
@article{golub1965calculating,
  title={Calculating the singular values and pseudo-inverse of a matrix},
  author={Golub, Gene and Kahan, William},
  journal={Journal of the Society for Industrial and Applied Mathematics, Series B: Numerical Analysis},
  volume={2},
  number={2},
  pages={205--224},
  year={1965},
  publisher={SIAM}
}

@incollection{golub1971singular,
  title={Singular value decomposition and least squares solutions},
  author={Golub, Gene H and Reinsch, Christian},
  booktitle={Linear Algebra},
  pages={134--151},
  year={1971},
  publisher={Springer}
}
@inproceedings{clarkson2009numerical,
  title={Numerical linear algebra in the streaming model},
  author={Clarkson, Kenneth L and Woodruff, David P},
  booktitle={Proceedings of the forty-first annual ACM symposium on Theory of computing},
  pages={205--214},
  year={2009},
  organization={ACM}
}
@article{clarkson2017low,
  title={Low-rank approximation and regression in input sparsity time},
  author={Clarkson, Kenneth L and Woodruff, David P},
  journal={Journal of the ACM (JACM)},
  volume={63},
  number={6},
  pages={54},
  year={2017},
  publisher={ACM}
}
@incollection{deshpande2006adaptive,
  title={Adaptive sampling and fast low-rank matrix approximation},
  author={Deshpande, Amit and Vempala, Santosh},
  booktitle={Approximation, Randomization, and Combinatorial Optimization. Algorithms and Techniques},
  pages={292--303},
  year={2006},
  publisher={Springer}
}
@inproceedings{deshpande2010efficient,
  title={Efficient volume sampling for row/column subset selection},
  author={Deshpande, Amit and Rademacher, Luis},
  booktitle={2010 IEEE 51st Annual Symposium on Foundations of Computer Science},
  pages={329--338},
  year={2010},
  organization={IEEE}
}
@inproceedings{deshpande2011algorithms,
  title={Algorithms and hardness for subspace approximation},
  author={Deshpande, Amit and Tulsiani, Madhur and Vishnoi, Nisheeth K},
  booktitle={Proceedings of the twenty-second annual ACM-SIAM symposium on Discrete Algorithms},
  pages={482--496},
  year={2011},
  organization={Society for Industrial and Applied Mathematics}
}
@MISC{dic2012,
month = march,
year = {2012},
howpublished={\url{https://gist.github.com/h3xx/1976236}}
}
@MISC{scipysvds,
month = June,
year = {},
howpublished={\url{https://docs.scipy.org/doc/scipy-1.1.0/reference/generated/scipy.sparse.linalg.svds.html}}
}
@MISC{wic2019,
month = June,
year = {2019},
howpublished={\url{https://dumps.wikimedia.org/enwiki/latest/}}
}
@article{lehoucq1996evaluation,
  title={An evaluation of software for computing eigenvalues of sparse nonsymmetric matrices},
  author={Lehoucq, RB and Scott, JA},
  journal={Preprint MCS-P547},
  volume={1195},
  pages={5},
  year={1996},
  publisher={Citeseer}
}
@inproceedings{ramos2003using,
  title={Using tf-idf to determine word relevance in document queries},
  author={Ramos, Juan and others},
  booktitle={Proceedings of the first instructional conference on machine learning},
  volume={242},
  pages={133--142},
  year={2003}
}
@article{corman2002studying,
  title={Studying complex discursive systems. Centering resonance analysis of communication},
  author={Corman, Steven R and Kuhn, Timothy and McPhee, Robert D and Dooley, Kevin J},
  journal={Human communication research},
  volume={28},
  number={2},
  pages={157--206},
  year={2002},
  publisher={Wiley Online Library}
}

@article{batagelj2000some,
  title={Some analyses of Erdos collaboration graph},
  author={Batagelj, Vladimir and Mrvar, Andrej},
  journal={Social networks},
  volume={22},
  number={2},
  pages={173--186},
  year={2000},
  publisher={Elsevier}
}
@inproceedings{FL11,
author  = {D.~Feldman and M.~Langberg},
title   = {A unified framework for approximating and clustering data},
booktitle = {STOC},
pages   = {569--578},
year    = {2011},
note    = {See http://arxiv.org/abs/1106.1379 for fuller version}
}

@inproceedings{mikolov2013distributed,
  title={Distributed representations of words and phrases and their compositionality},
  author={Mikolov, Tomas and Sutskever, Ilya and Chen, Kai and Corrado, Greg S and Dean, Jeff},
  booktitle={Advances in neural information processing systems},
  pages={3111--3119},
  year={2013}
}
@article{mikolov2013efficient,
  title={Efficient estimation of word representations in vector space},
  author={Mikolov, Tomas and Chen, Kai and Corrado, Greg and Dean, Jeffrey},
  journal={arXiv preprint arXiv:1301.3781},
  year={2013}
}
}‏
@inproceedings{feldman2006coresets,
  title={Coresets forweighted facilities and their applications},
  author={Feldman, Dan and Fiat, Amos and Sharir, Micha},
  booktitle={Foundations of Computer Science, 2006. FOCS'06. 47th Annual IEEE Symposium on},
  pages={315--324},
  year={2006},
  organization={IEEE}
}
@inproceedings{AlaamFeldmanAdiel,
  title={Tight sensitivity bounds, reductions from affine to non-affine, and newton-raphson},
  author={Maalouf, Alaa and Feldman, Dan and Statman, Adiel},
  year={2019},
  organization={Society for Industrial and Applied Mathematics}
}
@inproceedings{frankwolf,
  title={Coresets, Sparse Greedy Approximation, and the Frank-Wolfe Algorithm},
  author={Kenneth L. Clarksonl},
  year={2008},
  organization={Society for Industrial and Applied Mathematics}
}
@inproceedings{feldman2010coresets,
  title={Coresets and sketches for high dimensional subspace approximation problems},
  author={Feldman, Dan and Monemizadeh, Morteza and Sohler, Christian and Woodruff, David P},
  booktitle={Proceedings of the twenty-first annual ACM-SIAM symposium on Discrete Algorithms},
  pages={630--649},
  year={2010},
  organization={Society for Industrial and Applied Mathematics}
}
@inproceedings{sarlos2006improved,
  title={Improved approximation algorithms for large matrices via random projections},
  author={Sarlos, Tamas},
  booktitle={2006 47th Annual IEEE Symposium on Foundations of Computer Science (FOCS'06)},
  pages={143--152},
  year={2006},
  organization={IEEE}
}
@article{shyamalkumar2012efficient,
  title={Efficient subspace approximation algorithms},
  author={Shyamalkumar, Nariankadu D and Varadarajan, Kasturi},
  journal={Discrete \& Computational Geometry},
  volume={47},
  number={1},
  pages={44--63},
  year={2012},
  publisher={Springer}
}
@article{mahoney2011randomized,
  title={Randomized algorithms for matrices and data},
  author={Mahoney, Michael W and others},
  journal={Foundations and Trends{\textregistered} in Machine Learning},
  volume={3},
  number={2},
  pages={123--224},
  year={2011},
  publisher={Now Publishers, Inc.}
}
@article{halko2011finding,
  title={Finding structure with randomness: Probabilistic algorithms for constructing approximate matrix decompositions},
  author={Halko, Nathan and Martinsson, Per-Gunnar and Tropp, Joel A},
  journal={SIAM review},
  volume={53},
  number={2},
  pages={217--288},
  year={2011},
  publisher={SIAM}
}
@inproceedings{nguyen2009fast,
  title={A fast and efficient algorithm for low-rank approximation of a matrix},
  author={Nguyen, Nam H and Do, Thong T and Tran, Trac D},
  booktitle={Proceedings of the forty-first annual ACM symposium on Theory of computing},
  pages={215--224},
  year={2009},
  organization={ACM}
}

@inproceedings{har2006get,
  title={How to get close to the median shape},
  author={Har-Peled, Sariel},
  booktitle={Proceedings of the twenty-second annual symposium on Computational geometry},
  pages={402--410},
  year={2006},
  organization={ACM}
}
@article{peng2012rasl,
  title={RASL: Robust alignment by sparse and low-rank decomposition for linearly correlated images},
  author={Peng, Yigang and Ganesh, Arvind and Wright, John and Xu, Wenli and Ma, Yi},
  journal={IEEE transactions on pattern analysis and machine intelligence},
  volume={34},
  number={11},
  pages={2233--2246},
  year={2012},
  publisher={IEEE}
}
@inproceedings{deshpande2006matrix,
  title={Matrix approximation and projective clustering via volume sampling},
  author={Deshpande, Amit and Rademacher, Luis and Vempala, Santosh and Wang, Grant},
  booktitle={Proceedings of the seventeenth annual ACM-SIAM symposium on Discrete algorithm},
  pages={1117--1126},
  year={2006},
  organization={Society for Industrial and Applied Mathematics}
}
@inproceedings{deshpande2007sampling,
  title={Sampling-based dimension reduction for subspace approximation},
  author={Deshpande, Amit and Varadarajan, Kasturi},
  booktitle={Proceedings of the thirty-ninth annual ACM symposium on Theory of computing},
  pages={641--650},
  year={2007},
  organization={ACM}
}
@inproceedings{feldman2015more,
  title={More constraints, smaller coresets: Constrained matrix approximation of sparse big data},
  author={Feldman, Dan and Tassa, Tamir},
  booktitle={Proceedings of the 21th ACM SIGKDD International Conference on Knowledge Discovery and Data Mining},
  pages={249--258},
  year={2015},
  organization={ACM}
}

@inproceedings{feldman2012data,
  title={Data reduction for weighted and outlier-resistant clustering},
  author={Feldman, Dan and Schulman, Leonard J},
  booktitle={Proceedings of the twenty-third annual ACM-SIAM symposium on Discrete Algorithms},
  pages={1343--1354},
  year={2012},
  organization={Society for Industrial and Applied Mathematics}
}

@article{zhang2010fast,
  title={Fast algorithms for the generalized Foley--Sammon discriminant analysis},
  author={Zhang, Lei-Hong and Liao, Li-Zhi and Ng, Michael K},
  journal={SIAM Journal on Matrix Analysis and Applications},
  volume={31},
  number={4},
  pages={1584--1605},
  year={2010},
  publisher={SIAM}
}

@article{braverman2016new,
  title={New frameworks for offline and streaming coreset constructions},
  author={Braverman, Vladimir and Feldman, Dan and Lang, Harry},
  journal={arXiv preprint arXiv:1612.00889},
  year={2016}
}

@article{lloyd1982least,
  title={Least squares quantization in PCM},
  author={Lloyd, Stuart},
  journal={IEEE transactions on information theory},
  volume={28},
  number={2},
  pages={129--137},
  year={1982},
  publisher={IEEE}
}

@inproceedings{sculley2010web,
  title={Web-scale k-means clustering},
  author={Sculley, David},
  booktitle={Proceedings of the 19th international conference on World wide web},
  pages={1177--1178},
  year={2010},
  organization={ACM}
}

@article{burk1987geometric,
  title={The geometric, logarithmic, and arithmetic mean inequality},
  author={Burk, Frank},
  journal={The American Mathematical Monthly},
  volume={94},
  number={6},
  pages={527--528},
  year={1987},
  publisher={Mathematical Association of America}
}
@article{cohen2015optimal,
  title={Optimal approximate matrix product in terms of stable rank},
  author={Cohen, Michael B and Nelson, Jelani and Woodruff, David P},
  journal={arXiv preprint arXiv:1507.02268},
  year={2015}
}
@article{batson2012twice,
  title={Twice-ramanujan sparsifiers},
  author={Batson, Joshua and Spielman, Daniel A and Srivastava, Nikhil},
  journal={SIAM Journal on Computing},
  volume={41},
  number={6},
  pages={1704--1721},
  year={2012},
  publisher={SIAM}
}
@inproceedings{arthur2007k,
  title={k-means++: The advantages of careful seeding},
  author={Arthur, David and Vassilvitskii, Sergei},
  booktitle={Proceedings of the eighteenth annual ACM-SIAM symposium on Discrete algorithms},
  pages={1027--1035},
  year={2007},
  organization={Society for Industrial and Applied Mathematics}
}

@inproceedings{drineas2006sampling,
  title={Sampling algorithms for l 2 regression and applications},
  author={Drineas, Petros and Mahoney, Michael W and Muthukrishnan, S},
  booktitle={Proceedings of the seventeenth annual ACM-SIAM symposium on Discrete algorithm},
  pages={1127--1136},
  year={2006},
  organization={Society for Industrial and Applied Mathematics}
}

@article{gal2015dropout,
  title={Dropout as a Bayesian approximation: Representing model uncertainty in deep learning},
  author={Gal, Yarin and Ghahramani, Zoubin},
  journal={arXiv preprint arXiv:1506.02142},
  volume={2},
  year={2015}
}

@inproceedings{macqueen1967some,
  title={Some methods for classification and analysis of multivariate observations},
  author={MacQueen, James and others},
  booktitle={Proceedings of the fifth Berkeley symposium on mathematical statistics and probability},
  volume={1},
  number={14},
  pages={281--297},
  year={1967},
  organization={Oakland, CA, USA.}
}

@inproceedings{fernandez2014prosody,
  title={Prosody contour prediction with long short-term memory, bi-directional, deep recurrent neural networks.},
  author={Fernandez, Raul and Rendel, Asaf and Ramabhadran, Bhuvana and Hoory, Ron},
  booktitle={Interspeech},
  pages={2268--2272},
  year={2014}
}

@article{hochreiter1996bridging,
  title={Bridging long time lags by weight guessing and “Long Short-Term Memory”},
  author={Hochreiter, Sepp and Schmidhuber, Jurgen},
  journal={Spatiotemporal models in biological and artificial systems},
  volume={37},
  pages={65--72},
  year={1996}
}

@inproceedings{feldman2013turning,
  title={Turning big data into tiny data: Constant-size coresets for k-means, pca and projective clustering},
  author={Feldman, Dan and Schmidt, Melanie and Sohler, Christian},
  booktitle={Proceedings of the Twenty-Fourth Annual ACM-SIAM Symposium on Discrete Algorithms},
  pages={1434--1453},
  year={2013},
  organization={SIAM}
}

@article{gjoka2009walk,
  title={A walk in facebook: Uniform sampling of users in online social networks},
  author={Gjoka, Minas and Kurant, Maciej and Butts, Carter T and Markopoulou, Athina},
  journal={arXiv preprint arXiv:0906.0060},
  year={2009}
}

@inproceedings{cohen2015dimensionality,
  title={Dimensionality reduction for k-means clustering and low rank approximation},
  author={Cohen, Michael B and Elder, Sam and Musco, Cameron and Musco, Christopher and Persu, Madalina},
  booktitle={Proceedings of the Forty-Seventh Annual ACM on Symposium on Theory of Computing},
  pages={163--172},
  year={2015},
  organization={ACM}
}

@article{davie2013improved,
  title={Improved bound for complexity of matrix multiplication},
  author={Davie, Alexander Munro and Stothers, Andrew James},
  journal={Proceedings of the Royal Society of Edinburgh: Section A Mathematics},
  volume={143},
  number={02},
  pages={351--369},
  year={2013},
  publisher={Cambridge Univ Press}
}

@inproceedings{le2014powers,
  title={Powers of tensors and fast matrix multiplication},
  author={Le Gall, Fran{\c{c}}ois},
  booktitle={Proceedings of the 39th international symposium on symbolic and algebraic computation},
  pages={296--303},
  year={2014},
  organization={ACM}
}

@inproceedings{feldman2016dimensionality,
  title={Dimensionality reduction of massive sparse datasets using coresets},
  author={Feldman, Dan and Volkov, Mikhail and Rus, Daniela},
  booktitle={Advances in Neural Information Processing Systems},
  pages={2766--2774},
  year={2016}
}
@inproceedings{ostrovsky2006effectiveness,
  title={The effectiveness of Lloyd-type methods for the k-means problem},
  author={Ostrovsky, Rafail and Rabani, Yuval and Schulman, Leonard J and Swamy, Chaitanya},
  booktitle={Foundations of Computer Science, 2006. FOCS'06. 47th Annual IEEE Symposium on},
  pages={165--176},
  year={2006},
  organization={IEEE}
}
@inproceedings{pennington2014glove,
  title={Glove: Global Vectors for Word Representation.},
  author={Pennington, Jeffrey and Socher, Richard and Manning, Christopher D},
  booktitle={EMNLP},
  volume={14},
  pages={1532--1543},
  year={2014}
}

@article{varadarajan2012sensitivity,
  title={On the sensitivity of shape fitting problems},
  author={Varadarajan, Kasturi and Xiao, Xin},
  journal={arXiv preprint arXiv:1209.4893},
  year={2012}
}

@inproceedings{feldman2011unified,
  title={A unified framework for approximating and clustering data},
  author={Feldman, Dan and Langberg, Michael},
  booktitle={Proceedings of the forty-third annual ACM symposium on Theory of computing},
  pages={569--578},
  year={2011},
  organization={ACM}
}

@inproceedings{langberg2010universal,
  title={Universal $\varepsilon$-approximators for integrals},
  author={Langberg, Michael and Schulman, Leonard J},
  booktitle={Proceedings of the twenty-first annual ACM-SIAM symposium on Discrete Algorithms},
  pages={598--607},
  year={2010},
  organization={SIAM}
}

@article{ngo2012trace,
  title={The trace ratio optimization problem},
  author={Ngo, Thanh T and Bellalij, Mohammed and Saad, Yousef},
  journal={SIAM review},
  volume={54},
  number={3},
  pages={545--569},
  year={2012},
  publisher={SIAM}
}

@article{hsu2012tail,
  title={Tail inequalities for sums of random matrices that depend on the intrinsic dimension},
  author={Hsu, Daniel and Kakade, Sham and Zhang, Tong and others},
  journal={Electronic Communications in Probability},
  volume={17},
  year={2012},
  publisher={The Institute of Mathematical Statistics and the Bernoulli Society}
}

@article{drineas2006fast,
  title={Fast Monte Carlo algorithms for matrices I: Approximating matrix multiplication},
  author={Drineas, Petros and Kannan, Ravi and Mahoney, Michael W},
  journal={SIAM Journal on Computing},
  volume={36},
  number={1},
  pages={132--157},
  year={2006},
  publisher={SIAM}
}

@article{johnson1984extensions,
  title={Extensions of Lipschitz mappings into a Hilbert space},
  author={Johnson, William B and Lindenstrauss, Joram},
  journal={Contemporary mathematics},
  volume={26},
  number={189-206},
  pages={1},
  year={1984}
}

@inproceedings{feldman2011scalable,
  title={Scalable training of mixture models via coresets},
  author={Feldman, Dan and Faulkner, Matthew and Krause, Andreas},
  booktitle={Advances in neural information processing systems},
  pages={2142--2150},
  year={2011}
}
\end{document}